\def\d{\partial}												
\def\E{\mathbb{E}}											
\def\P{\mathbb{P}}											
\def\I{\mathbb{I}}											
\def\R{\mathbb{R}}											
\def\N{\mathbb{N}}
\def\L{{\cal L}}				
\def\A{{\cal A}}				
\def\H{{\cal H}}
\def\D{{\cal D}}
\def\O{{\cal O}}
\def\<{\left\langle} 
\def\>{\right\rangle}
\def\l({\left(}
\def\r){\right)}
\def\eps{\epsilon}
\def\sig{\sigma}
\def\om{\omega}
\def\ups{\upsilon}
\def\Y{Y^\eps}
\def\y{\overline{y}}
\def\lam{\lambda}
\theoremstyle{plain}
\newtheorem{theorem}{Theorem}
\newtheorem{lemma}[theorem]{Lemma}								
\newtheorem{proposition}[theorem]{Proposition}	
\newtheorem{corollary}[theorem]{Corollary}	
\theoremstyle{definition}
\theoremstyle{remark}
\numberwithin{equation}{section}	
\numberwithin{theorem}{section}
\begin{document}

\title{Spectral Decomposition of Option Prices in Fast Mean-Reverting Stochastic Volatility Models}

\author{Jean-Pierre Fouque\thanks{Department of Statistics \& Applied Probability, University of California,
        Santa Barbara, CA 93106-3110, {\em fouque@pstat.ucsb.edu}. Work partially supported by NSF grant DMS-0806461.}
        \and Sebastian Jaimungal\thanks{Department of Statistics, University of Toronto,
        Toronto, Ontario  M5S 3G3, {\em sebastian.jaimungal@utoronto.ca}.  Work partially supported by NSERC.}
        \and Matthew J. Lorig\thanks{ORFE Department, Princeton University, Sherrerd Hall, 
				Princeton NJ 08544, {\em mlorig@princeton.edu}.}
        }
\date{\today}
\maketitle

\begin{abstract}
Using spectral decomposition techniques and singular perturbation theory, we develop a systematic method to approximate the prices of a variety of European and path-dependent options in a fast mean-reverting stochastic volatility setting.  Our method is shown to be equivalent to those developed in \cite{fouque}, but has the advantage of being able to price options for which the methods of \cite{fouque} are unsuitable.  In particular, we are able to price double-barrier options.  To our knowledge, this is the first time that double-barrier options have been priced in a stochastic volatility setting in which the Brownian motions driving the stock and volatility are correlated.
\end{abstract}

%

\section{Introduction} \label{sec:intro}
Since it was originally analyzed in the context of Sturm-Liouville operators, spectral theory has enjoyed wide popularity in both science and engineering.  In physics, for example, the stationary-state wave functions are simply the eigenfunctions of the time-independent Schr\"{o}dinger equation.  And, electrical engineers are well-versed in the theory of Fourier series and Fourier transforms.  It is not surprising then, that techniques from spectral theory have found their place in finance as well.
\par
For instance, in \cite{lewis1998} eigenfunction methods are used to price European-style options in a Black-Scholes setting.  The authors of \cite{goldstein1997} use eigenfunction techniques in the context of bond pricing.  Spectral decomposition techniques have been particularly successful at aiding in the development of analytic pricing formulas for a variety of exotic options.  For example, in \cite{pelsser2000}, Fourier series methods are used to obtain closed-form expressions for prices of double-barrier options in the Black-Scholes setting.  And in \cite{linetsky2003, linetsky2002, linetsky2004} spectral decomposition techniques are used to obtain analytic option prices--both European and path-dependent--where the underlying and short rate are controlled by a one-dimensional diffusion.  Additionally, the authors of \cite{carr} use spectral methods to evaluate both bonds and options in a unified credit-equity framework.
\par
Like spectral theory, stochastic volatility models have become an indispensable tool in mathematical finance.  By and large, this is due to the fact that two of the earliest and most well-known stochastic volatility models--the Heston model \cite{heston1993} and Hull-White model \cite{hullwhite1987}--capture the most salient features of the implied volatility surface while at the same time maintaining analytic tractability.  Stochastic volatility models have become so popular, in fact, that entire books have been written on the subject \cite{fouque, gatheral, lewis2000}.
\par
It seems natural, then, to try to employ elements from spectral theory in a stochastic volatility setting.  Yet, other than the spectral decomposition of various volatility processes, which is expertly done in \cite{lewis2000}, there is a surprising lack of literature in this area.  In particular, we are unaware of any literature that uses spectral methods to price double-barrier options in a stochastic volatility setting in which the Brownian motions driving the stock and volatility are correlated.  The difficulty with using spectral analysis when the stock price and volatility are correlated arises because spectral decomposition techniques work best when there is some sort of symmetry inherent in the problem being studied.  This symmetry is broken when the stock price and volatility are correlated via two Brownian motions.  Yet we know that correlation between the stock price and the volatility is important because it is needed in order to capture the skew of the implied volatility at the money and reflect the leverage effect \cite{fouque,gatheral}.
\par
In this paper, we apply techniques from spectral theory to a class of fast mean-reverting stochastic volatility models in which the stock price and volatility are correlated via two Brownian motions.  The \textit{two}-dimensional diffusion that controls the stock and volatility is in contrast to the work of \cite{linetsky2001, linetsky2004, linetsky2002, linetsky2003, mijatovic2010local}, where spectral and probabilistic methods are used to price options on \emph{scalar} diffusions.  Extensions to two-dimensions are highly non-trivial and it is this that distinguishes our work from the earlier contributions.  The class of fast mean-reverting stochastic volatility models, first studied in \cite{fouque}, is an important class of models to consider because volatility has been empirically shown to operate on short time-scales \cite{fouque2003, hillebrand2005}.  To price options in this setting we employ the singular perturbation methods of \cite{fouque}, but we do this in the context of a spectral expansion.
\par
The rest of this paper proceeds as follows.  In section \ref{sec:model} we introduce the class of fast mean-reverting stochastic volatility models first considered in \cite{fouque}.  Additionally, we discuss how this class this class of models relates to two of the more popular models used in practice -- SABR and Heston.  In section \ref{sec:method} we present an option-pricing framework, which allows us to consider both European, single- and double-barrier options.  This framework results in an option-pricing partial differential equation (PDE) along with appropriate boundary conditions (BC's), which must be solved in order to specify the price of an option.  We briefly mention how the authors of \cite{fouque} use singular perturbation theory to obtain an approximate solution to the option-pricing PDE and explain why for certain options (e.g. double-barrier options) this methodology is unsuitable.  We then present a new method of solving the option-pricing PDE -- one which is suitable in cases where the methods of \cite{fouque} are not.  In this new method we assume a solution of a specific form and show how this leads to an eigenvalue equation.  An approximate solution to the eigenvalue equation is given in section \ref{sec:eigen}.  Then, using this solution, we provide formulas for the approximate price of an option in section \ref{sec:prices}.  Equivalence of the option-pricing formulas presented in this paper to those derived in \cite{fouque} is established in section \ref{sec:equivalence}, as is the accuracy of our pricing approximation.  In section \ref{sec:practical} we discuss the practical implementation of our methods.  We present three examples: European calls, up-and-out calls, and double-barrier knock-out calls.  Additionally, we mention how our framework can be extended to price knock-in and rebate options.  We finish by discussing some issues related to calibration.

\section{A Class of Models and an Outline of Our Method}
In this section we introduce a class of fast mean-reverting stochastic volatility models.  We then present an option-pricing problem and outline our method for obtaining an approximate solution to this problem.
\subsection{A Class of Fast Mean-Reverting Stochastic Volatility Models}\label{sec:model}
We study the class of fast mean-reverting stochastic volatility models first considered by Fouque et al. in \cite{fouque}.  Specifically, under the risk-neutral pricing measure $\P$, we consider a non-dividend paying asset (stock, index, etc.) $S_t = \exp \l( X_t \r)$ whose dynamics are given by the following system of stochastic differential equations (SDE's)
\begin{align}
dX_t
	&= \left(\mu - \frac{1}{2}f^2(Y_t^\eps)\right) dt + f(Y_t^\eps) \, dW_t , \\
d\Y_t
	&=	\l(\frac{1}{\eps}\l(\y - \Y_t\r) -\frac{\ups \sqrt{2}}{\sqrt{\eps}}\Lambda(\Y_t)\r)dt
			+ \frac{\ups \sqrt{2}}{\sqrt{\eps}} \, dB_t ,	\\
d\<W,B\>_t
	&=	\rho \, dt.
\end{align}
Here, $W_t$ and $B_t$ are Brownian motions under $\P$ with instantaneous correlation $\rho$ such that $\rho^2 \leq 1$.  The price process $S_t$ follows a geometric Brownian motion with growth rate $\mu$, which equals to the risk-free rate of interest, and with stochastic volatility $f(Y^\eps_t)$ (the traditional symbol for the risk-free rate of interest $r$ is reserved for a different purpose).  The dynamics of $X_t = \log S_t$ are obtained from Ito's Lemma.
We note that, as it should be, the discounted stock price $\l( e^{-\mu t}S_t \r)$ is a martingale with respect to the canonical filtration of the Brownian motions.  The process $\Y_t$ evolves as an Ornstein-Uhlenbeck (OU) process under the physical measure $\widetilde{\P}$.  That is
\begin{align}
d\Y_t
	&=	\frac{1}{\eps}\l(\y- \Y_t\r)dt + \frac{\ups \sqrt{2}}{\sqrt{\eps}} \, d\widetilde{B}_t .	\qquad \text{(under physical measure $\widetilde{\P}$)}
\end{align}
However, under the risk-neutral measure the dynamics of $\Y_t$ acquire a market price of volatility risk, which is given by $\Lambda(\Y_t)$.  The superscript on $\Y_t$ indicates that this process evolves on a time-scale $\eps$.  The parameter $\eps$ is intended to be small (i.e. $0 < \eps \ll 1$) so that the rate of mean-reversion of the OU process is large.  It is in this sense that the volatility is fast mean-reverting.  We note that under the physical measure, $\Y_t$ has a unique invariant distribution $Y_\infty^\eps \sim {\cal N}(\y,\ups^2)$.
\par
It is not necessary to specify the precise form of $f(y)$ or $\Lambda(y)$, as only certain moment will play a role in the asymptotic analysis that follows.  Likewise, the particular choice of $Y_t^\eps$ as an OU process is not crucial for our analysis.  However, in order to guarantee the accuracy of our pricing approximation we need the following assumptions:
\begin{enumerate}
	\item Under the physical measure, $Y_t^\eps$ has a unique invariant distribution, which is independent of $\eps$.  \label{item:first}
	\item Under the physical measure, the moments of $Y_t^\eps$ are uniformly bounded in $t$.  Note that this assumption actually follows from the previous assumption.
	\item The smallest non-zero eigenvalue of $\L_Y^\eps$ -- the infinitesimal generator of $Y_t^\eps$ under the physical measure -- is strictly positive.
	\item There exists a constant $C_\Lambda >  0$ such that $| \Lambda(y) | < C_\Lambda$. \label{item:penultimate}
	\item The function $f(y)$ is such that the solution $\phi(y) $ of Poisson equation \eqref{eq:PoissonPhi} is at most polynomially growing. \label{item:last}
\end{enumerate}
We note that the Cox--Ingersoll--Ross (CIR) process, as well as the OU process satisfy the above assumptions.
\par
Of practical interest is how two of the most popular stochastic volatility models -- Heston and SABR -- fit within the fast mean-reverting class of models discussed in this paper.  The Heston model \cite{heston1993} can be accommodated in this class by choosing $Y_t^\eps$ to be a CIR process under the physical measure, choosing $f(y) = \sqrt{y}$, and setting $\Lambda(y)=0$.  The rate of mean-reversion of the CIR process should then be scaled by $1/\eps$ and the ``volatility of volatility'' term should be scaled by $1/\sqrt{\eps}$.  This will ensure that the invariant distribution is independent of $\eps$.  Note that the choice $\Lambda(y)=0$ is not really a restriction on the Heston model as the stochastic variance $\Y_t$ in Heston is a CIR process under both the physical and risk-neutral measures.
\par
The key change between the Black-Scholes model \cite{black1973pricing} and the Heston model is that the constant volatility of Black-Scholes is replaced by the square-root of a CIR process in Heston.  In this sense, the Heston model is essentially an extension of the Black-Scholes model.  As such, the CIR process that controls the volatility in the Heston model can be treated as a perturbation around geometric Brownian motion with constant volatility, just as the fast mean-reverting factor of volatility is treated as a perturbation around constant volatility geometric Brownian motion in this paper.  This is the reason that the Heston model fits within the fast mean-reverting stochastic volatility framework.
%
\par
Contrary to Heston, the SABR model \cite{sabr} is an extension of the Constant-Elastic-Variance (CEV) model \cite{CoxCEV}.  As such, the SABR model does not fit within the class of models considered in this paper.  That said, an extension of the CEV model that includes a fast mean-reverting factor of volatility is possible.  Approximate option prices for the CEV model with a fast mean-reverting factor of volatility are derived using singular perturbation theory in \cite{fouqueCEV}.  Also note, because option prices in the CEV framework have a spectral representation \cite{linetsky2003}, the combined singular perturbation and spectral method developed in this paper would be suitable for CEV with a fast mean-reverting factor of volatility.  However, this is outside the scope of the present work.

\subsection{Statement of the Option Pricing Problem and Outline of our Method}\label{sec:method}
In this section we introduce an option-pricing problem and outline our method of obtaining an approximate solution to this problem.
\par
Consider an option expiring at time $T<\infty$ whose payoff can be expressed
\begin{align}
\text{Payoff}
	&=	h\l(X_\tau\r)	,&
\tau
	&= 		\inf\{t \geq 0 : X_t \notin I \} \wedge T, \\
I
	&:=		(l,r) , &
-\infty
	&\leq	l < r \leq \infty , \\
h
	&:		I \cup \left\{ l \right\} \cup \left\{ r \right \} \rightarrow \R_{+} , &
h(l)
	&=	h(r) = 0	,		\label{eq:payoff}
\end{align}
In words, the option has payoff $h(X_T)$ if $X$ does not leave the interval $I$ prior to time $T$, otherwise the option payoff is zero.  We use the convention $\inf\left\{ \emptyset \right\}=\infty$.  Options that fit within the above payoff framework include European and knock-out style options.  But, we shall see in section \ref{sec:OtherOptions} that our results can be extended to include knock-in and rebate style options as well.
\par
We introduce a money market account $M_t = e^{\mu t}$, which we shall use as our option-pricing num\'{e}raire.  According to risk-neutral pricing, the value of the option $P^\eps_s$ at time $s \leq T$ is given by
\begin{align}
\frac{P^\eps_s}{M_s}
	&=	\E \left[ \frac{h\l(X_\tau\r)}{M_\tau} \bigg| {\cal F}_s \right]	,	&
{\cal F}_s
	&=	\sigma \l( \left\{X_t,Y^\eps_t : 0 \leq t \leq s \right\} \r)	.
\end{align}
After a bit of algebra, and using the Markov property of $\l( X, Y^\eps \r)$ one finds
\begin{align}
P^\eps_s
	&=	\I_{\left\{ \tau < s \right\}} e^{\mu (s - \tau)}h\l( X_\tau\r) + 	\E \left[ e^{-\mu (\tau - s) } h\l(X_\tau\r)  \I_{\left\{ \tau \geq s \right\}} \Big| {\cal F}_s \right] \\
	&=	\I_{\left\{ \tau < s \right\}} e^{\mu (s - \tau)}h\l( X_\tau\r) + 	\I_{\left\{ \tau \geq s \right\}} P^\eps\l(s,X_s,Y^\eps_s\r)	,	 \label{eq:Pdecomp} 
\end{align}
where
\begin{align}
P^\eps\l(s,X_s,Y^\eps_s\r)
	&=		\E \left[ e^{-\mu (\tau^s - s) } h\l(X_{\tau^s}\r)  \Big| X_s, \Y_s \right]	,		 \label{eq:PepsRiskNeutral} 
\end{align}
and
\begin{align}
\tau^s
	&=		\inf\{t \geq s : X_t \notin I \} \wedge T.
\end{align}
We note that the first term in \eqref{eq:Pdecomp} is zero as $X_\tau$ is either equal to $l$ or $r$ on the set $\left\{\tau<s\right\}$ and $h(l)=h(r)=0$.  Thus, the price of the option at time $s$ is given simply by the second term of \eqref{eq:Pdecomp}.  From \eqref{eq:PepsRiskNeutral}, one deduces that the function $P^\eps(s,x,y)$ solves the following PDE and BC's (see e.g. Chapter $9$ of \cite{oksendal2003stochastic})
\begin{align}
0
	&=	\l( \d_s - \mu + \L_{X,Y}^\eps \r) P^\eps , 	& & (s,x,y) \in [ 0,T ] \times I \times \R , \label{eq:PepsPDE} \\
h(x)
	&=	P^\eps(T,x,y)	, \\
0
	&=	P^\eps(s,l,y)	,	& & \text{if} \quad l > -\infty , \label{eq:BCoptional,l} \\
0
	&=	P^\eps(s,r,y) ,	& & \text{if} \quad r < \infty .	\label{eq:BCoptional,r}
\end{align}
Note that BC's \eqref{eq:BCoptional,l} and \eqref{eq:BCoptional,r} are not required if $l$ or $r$ are infinite.
The notation $\L_{X,Y}^\eps$ represents the infinitesimal generator of $\l(X,\Y\r)$.  For clarity, we write $\L_{X,Y}^\eps$ explicitly and state its domain $\D$
\begin{align}
\L_{X,Y}^\eps
	&=				\frac{1}{\eps} \l( \l( \y-y \r)\d_y + \ups^2 \d^2_{yy} \r)
						+ \frac{1}{\sqrt{\eps }} \l(\rho \ups \sqrt{2} f(y) \d^2_{xy}-\ups \sqrt{2}\Lambda(y)\d_y \r) \\
	&\qquad		+ \l( \mu - \frac{1}{2} f^2(y) \r) \d_x + \frac{1}{2}f^2(y)\d^2_{xx} , \label{eq:Lxy} \\
\D
	&=	I \times \R .
\end{align}
To simplify subsequent calculations we introduce $u^\eps(t,x,y)$ such that
\begin{align}
P^\eps(s,x,y)
	&=	e^{-\mu t} u^\eps(t,x,y)	, &
t
	&= T - s . \label{eq:uAndP}
\end{align}
A straightforward substitution shows that $u^\eps(t,x,y)$ satisfies the following PDE and BC's
\begin{align}
0
	&=	\l( -\d_t + \L_{X,Y}^\eps \r) u^\eps , 	& & (t,x,y) \in [0,T] \times I \times \R , \label{eq:uPDE} \\
h(x)
	&=	u^\eps(0,x,y)	, \label{eq:uBC} \\
0	
	&=	u^\eps(t,l,y)	 													& & \text{if} \quad l > -\infty , \label{eq:BC,l}\\
0
	&=	u^\eps(t,r,y)  													& & \text{if} \quad r <	\infty . \label{eq:BC,r}
\end{align}
We set $u^\eps(t,x,y)=0$ for $x \notin I$.  Although $u^\eps(t,x,y)$ is in fact the un-discounted price of an option  with time-to-maturity $t=T-s$, from this point onward we shall refer to $u^\eps(t,x,y)$ simply as the \emph{price}.  For convenience, the theorems derived in section \ref{sec:prices} are given in terms of $u^\eps(t,x,y)$, as are the examples provided in section \ref{sec:practical}. The reader should keep in mind that the \emph{true} price of an option $P^\eps(s,x,y)$ at time $s$ can be recovered from $u^\eps(t,x,y)$ using \eqref{eq:uAndP}.
\par
In \cite{fouque} the authors use singular perturbation techniques to find an approximate solution to PDE \eqref{eq:uPDE} by expanding $u^\eps(t,x,y)$ in powers of the small parameter $\sqrt{\eps}$
\begin{align}
u^\eps
	&= u^{(0)} + \sqrt{\eps} \, u^{(1)} + \eps  \, u^{(2)} + \ldots .
\end{align}
Roughly speaking, the authors of \cite{fouque} show
\begin{enumerate}
	\item  The functions $u^{(0)}$ and $u^{(1)}$ are independent of $y$.
	\item  The the $\O\l(\eps^0\r)$ price is given by $u^{(0)}(t,x) = u^{BS}(t,x)$, where $u^{BS}(t,x)$ is the Black-Scholes price of an option (with an appropriate level of volatility).
	\item  The the $\O\l(\eps^{1/2}\r)$ price $u^{(1)}$ solves $\L^{BS} u^{(1)}= \A^{(1)} u^{BS}$, where $\L^{BS} = \l( -\d_t + \L_X\r)$ is the Black-Scholes pricing operator, $\L_X$ is defined in \eqref{eq:Lx} and $\A^{(1)}$ is a linear operator defined in \eqref{eq:A}.
\end{enumerate}
For European-style options -- for which $u^\eps(t,x,y)$ must satisfy only BC \eqref{eq:uBC} -- and for single-barrier options -- for which $u^\eps(t,x,y)$ must satisfy only BC's \eqref{eq:uBC} and \emph{one} of either \eqref{eq:BC,l} or \eqref{eq:BC,r} -- the method of \cite{fouque} works well because in these cases there exist analytic formulas for $u^{BS}(t,x)$.  However, for double-barrier options -- for which $u^\eps(t,x,y)$ must satisfy all three BC's \eqref{eq:uBC}, \eqref{eq:BC,l} and \eqref{eq:BC,r} --- the methods of \cite{fouque} are problematic because the Black-Scholes price of a double-barrier option must be expressed as an infinite series \cite{pelsser2000}.
\par
In this paper, we use a combination of singular perturbation techniques and spectral methods to solve PDE \eqref{eq:uPDE} with BC's \eqref{eq:uBC} - \eqref{eq:BC,r}.  The spectral method is outlined as follows: suppose we have the solution to the following eigenvalue equation
\begin{align}
\L_{X,Y}^\eps \Psi_q^\eps
	&= \lam_q^\eps \Psi_q^\eps . \label{eq:EigenEq} \\
0	
	&=	\Psi_q^\eps(l,y)	 													& & \text{if} \quad l > -\infty , \label{eq:PsiBC,l}\\
0
	&=	\Psi_q^\eps(r,y)  													& & \text{if} \quad r <	\infty . \label{eq:PsiBC,r}
\end{align}
By ``solution to the eigenvalue equation'' we mean that we have the full set of eigenvalues $\lam_q^\eps$ and corresponding eigenfunctions $\Psi_q^\eps(x,y)$ for which which equations \eqref{eq:EigenEq} - \eqref{eq:PsiBC,r} are satisfied.  Then it is clear that any linear combination of functions of the form $e^{\lam_q^\eps t}\Psi_q^\eps(x,y)$ will satisfy PDE \eqref{eq:uPDE} and BC's \eqref{eq:BC,l} and \eqref{eq:BC,r}.  Hence, as long as the eigenfunctions allow us enough flexibility to match BC \eqref{eq:uBC}, the function $u^\eps(t,x,y)$ can be expressed as
\footnote{
For simplicity we have assumed either a purely discrete or absolutely continuous spectrum.  In fact, depending on the operator $\L_{X,Y}^\eps$ and the BC's, the spectrum may be discrete, absolutely continuous or mixed.  However, in this paper we do not endeavor to solve the full eigenvalue problem \eqref{eq:EigenEq}-\eqref{eq:PsiBC,r}.   Rather, we use singular perturbation techniques to find an approximate solution to \eqref{eq:EigenEq}-\eqref{eq:PsiBC,r}. For the asymptotic analysis we perform in section \ref{sec:eigen} we shall need to consider only discrete or continuous spectra.
}
\begin{align}
u^\eps(t,x,y)
	&=	\begin{cases}
			\sum_n A_n^\eps g_n^\eps(t) \Psi_n^\eps(x,y) 											&\text{(discrete spectrum)} \\
			\int A_\nu^\eps \, g_\nu^\eps(t) \, \Psi_\nu^\eps(x,y) \, d\nu		&\text{(continuous spectrum)}
			\end{cases} , &
g_q^\eps(t)
	&=	\exp \l( \lam_q^\eps t \r) , \label{eq:representation}
\end{align}
where $q$ is a place-holder for either $n$ or $\nu$ and $A_q^\eps$ are constants to be determined by the payoff.
\par
The main advantage of the spectral method is that by separating the spatial variables $(x,y)$ from the temporal variable $t$ the BC's \eqref{eq:BC,l} and \eqref{eq:BC,r} can be dealt with at the level of the eigenfunctions $\Psi_n^\eps(x,y)$ as in \eqref{eq:PsiBC,l} and \eqref{eq:PsiBC,r}, rather than at the level of option prices $u^\eps(t,x,y)$ as in \eqref{eq:BC,l} and \eqref{eq:BC,r}.  This method is particularly advantageous for pricing double-barrier options.
\par
Using representation \eqref{eq:representation}, what remains in order to specify the price of an option is to find expressions for $\Psi_q^\eps(x,y) $, $\lambda_q^\eps$ and $A_q^\eps$.  This is the subject of section \ref{sec:asymptotics}.

\section{Asymptotic Analysis}\label{sec:asymptotics}
An outline of the asymptotic analysis performed in this section is as follows.  First, in section \ref{sec:eigen} we derive an approximate solution to eigenvalue equation \eqref{eq:EigenEq}.  The key results of this derivation are presented in Theorem \ref{thm:eigen0}, Proposition \ref{prop:Psi0} and Theorem \ref{thm:Psi1}.  Next, in section \ref{sec:prices} we use the results of section \ref{sec:eigen} to derive an expression for the approximate price of an option.  This expression, which is given explicitly in Theorem \ref{thm:main}, serves as the main result of this paper.  In section \ref{sec:equivalence}, we prove that our method of obtaining the approximate price of an option is equivalent to the method of \cite{fouque}.  We summarize this equivalence in Theorems \ref{thm:u0=uBS} and \ref{thm:u1=uFPS}.  Finally, in Theorem \ref{thm:accuracy} we also establish the accuracy of our pricing approximation for the case of European options.

\subsection{Asymptotic Solution to the Eigenvalue Equation $\L_{X,Y}^\eps \Psi_q^\eps	= \lam_q^\eps \Psi_q^\eps$} \label{sec:eigen}
For general $f(y)$ and $\Lambda(y)$ there is no analytic solution to the eigenvalue equation $\L_{X,Y}^\eps\Psi_q^\eps=\lambda_q^\eps \Psi_q^\eps$.  However, we note that $\L_{X,Y}^\eps$ can be conveniently decomposed in powers of $\sqrt{\eps}$ as
\begin{align}
\L_{X,Y}^\eps
	&=	\frac{1}{\eps} \L^{(-2)} + \frac{1}{\sqrt{\eps}} \L^{(-1)} + \L^{(0)} , \\
\L^{(-2)}
	&=	\l( \y - y \r)\d_y + \ups^2 \d^2_{yy} , \\
\L^{(-1)}
	&=	\rho \ups \sqrt{2} f(y) \d^2_{xy}-\ups \sqrt{2}\Lambda(y)\d_y , \\
\L^{(0)}
	&=	\l( \mu - \frac{1}{2} f^2(y) \r) \d_x + \frac{1}{2}f^2(y)\d^2_{xx}	.
\end{align}
This decomposition suggests a singular perturbative approach.  To this end, we expand $\Psi_q^\eps$ and $\lambda_q^\eps$ in powers of $\sqrt{\eps}$.  We have
\begin{align}
\Psi_q^\eps
	&=	\Psi_q^{(0)} + \sqrt{\eps} \, \Psi_q^{(1)} + \eps \, \Psi_q^{(2)} + \ldots,	\\
\lambda_q^\eps
	&=	\lambda_q^{(0)} + \sqrt{\eps} \, \lambda_q^{(1)} + \eps \, \lambda_q^{(2)} + \ldots	.
\end{align}
We now insert the expansions for $\Psi_q^\eps(x,y)$ and $\lambda_q^\eps$ into eigenvalue equation \eqref{eq:EigenEq} and collect terms of like-powers of $\sqrt{\eps}$.  The $\O(\eps^{-1})$ and $\O(\eps^{-1/2})$ equations are
\begin{align}
&\O(\eps^{-1}): &
0
	&=	\L^{(-2)} \Psi_q^{(0)} , \\
&\O(\eps^{-1/2}): &
0
	&=	\L^{(-2)} \Psi_q^{(1)} + \L^{(-1)} \Psi_q^{(0)} .
\end{align}
Noting that all terms in $\L^{(-2)}$ and $\L^{(-1)}$ take derivatives with respect to $y$, we may choose solutions of the form $\Psi_q^{(0)}=\Psi_q^{(0)}(x)$ and $\Psi_q^{(1)}=\Psi_q^{(1)}(x)$ (i.e. functions of $x$ only).  Continuing the asymptotic analysis, the order $\O(\eps^{0})$ and $\O(\eps^{1/2})$ equations are
\begin{align}
&\O(\eps^0): &
\L^{(-2)}\Psi_q^{(2)}
	&= \l( \lambda_q^{(0)} - \L^{(0)}\r) \Psi_q^{(0)} , \label{eq:PoissonPsi2} \\
&\O(\eps^{1/2}): &
\L^{(-2)}\Psi_q^{(3)}
	&= -\L^{(-1)}\Psi_q^{(2)} + \l( \lambda_q^{(0)} - \L^{(0)}\r)\Psi_q^{(1)} + \lambda_q^{(1)} \Psi_q^{(0)}	, \label{eq:PoissonPsi3}
\end{align}
where we have used $\L^{(-1)}\Psi_q^{(1)}(x)=0$ in \eqref{eq:PoissonPsi2}.  Equations \eqref{eq:PoissonPsi2} and \eqref{eq:PoissonPsi3} are Poisson equations for $\Psi_q^{(2)}(x,y)$ and $\Psi_q^{(3)}(x,y)$ respectively in the variable $y$ with respect to the operator $\L^{(-2)} = \eps \, \L_Y^\eps$.  We remind the reader that the operator $\eps \, \L_Y^\eps$ is the infinitesimal generator of $Y_t^{1}$ under the physical measure.  In order for an equation of the form $\L_Y^\eps u(y) = v(y)$ to have a solution with reasonable growth at infinity, the following centering condition must hold
\begin{align}
\< v \>
	&:= \int v(y) \, dF_Y(y)
		=	0	,
\end{align}
where $F_Y$ is the invariant distribution of $Y_t^\eps$ under the physical measure.  For the prototype OU used in this paper $F_Y \sim {\cal N}\l(\y,\ups^2\r)$.  Throughout this paper, the notation $\< \cdot \>$ will always indicate averaging with respect to the invariant distribution $F_Y$.  In equations \eqref{eq:PoissonPsi2} and \eqref{eq:PoissonPsi3} the centering conditions become
\begin{align}
0
	&=	\l( \lambda_q^{(0)} - \< \L^{(0)} \> \r) \Psi_q^{(0)} , \label{eq:center1} \\
0
	&=	-\<\L^{(-1)}\Psi_q^{(2)}\> + \l( \lambda_q^{(0)} - \< \L^{(0)} \> \r)\Psi_q^{(1)} + \lambda_q^{(1)} \Psi_q^{(0)} . \label{eq:center2}
\end{align}
Using appropriate BC's, eigenvalue equation \eqref{eq:center1} can be solved explicitly, as the operator $\< \L^{(0)} \>$ is given by
\begin{align}
\<\L^{(0)}\>
	&= \l(\mu-\frac{1}{2}\sig^2\r)\d_x + \frac{1}{2}\sig^2\d^2_{xx} ,	&
\sig^2
	&=	\< f^2 \> .
\end{align}
However, in order to solve eigenvalue \eqref{eq:center2}, we need an expression for $\<\L^{(-1)}\Psi_q^{(2)}(x,\cdot)\>$.  To this end, we note from \eqref{eq:PoissonPsi2}
\begin{align}
\L^{(-2)}\Psi_q^{(2)}
	&= 	\l( \lambda_q^{(0)} - \L^{(0)}\r) \Psi_q^{(0)}	
	=		\l( \<\L^{(0)}\> - \L^{(0)}\r) \Psi_q^{(0)}
	=		\frac{1}{2}\l( \sig^2 - f^2 \r)\l(\d^2_{xx} - \d_x \r)\Psi_q^{(0)}.
\end{align}
Now, introducing $\phi(y)$ as a solution to the following Poisson equation
\begin{align}
\L^{(-2)}\phi
	&=	f^2 - \sig^2 , \label{eq:PoissonPhi}
\end{align}
we may express $\Psi_q^{(2)}(x,y)$ as
\begin{align}
\Psi_q^{(2)}(x,y)
	&=	-\frac{1}{2}\phi(y)\l(\d^2_{xx} - \d_x \r)\Psi_q^{(0)}(x).
\end{align}
Hence, $\< \L^{(-1)}\Psi_q^{(2)}(x,\cdot) \>$ is given by
\begin{align}
\< \L^{(-1)}\Psi_q^{(2)} \>
	&=	\<
			\l( \rho \ups \sqrt{2} f(y) \d^2_{xy}-\ups \sqrt{2}\Lambda(y)\d_y \r)
			\l( -\frac{1}{2}\phi(y)\l(\d^2_{xx} - \d_x \r)\Psi_q^{(0)}(x) \r)
			\>
	=	\A^{(1)} \, \Psi_q^{(0)}(x)	,
\end{align}
where
\begin{align}
\A^{(1)}
	&= V_3 \l( \d^3_{xxx} - \d^2_{xx} \r) + V_2 \l( \d^2_{xx}-\d_x \r) , &
V_2
	&= \frac{\ups}{\sqrt{2}} \<\Lambda \phi'\>,	 &
V_3
	&= \frac{- \rho \ups}{\sqrt{2}} \<f \phi'\> .	\label{eq:A}
\end{align}
Thus, from \eqref{eq:center2} we have
\begin{align}
\A^{(1)} \, \Psi_q^{(0)}
	&=	 \l( \lambda_q^{(0)} - \< \L^{(0)} \> \r)\Psi_q^{(1)} + \lambda_q^{(1)} \Psi_q^{(0)} . \label{eq:eigen1}
\end{align}
Given a solution to \eqref{eq:center1}, one can use \eqref{eq:eigen1} to find expressions for $\Psi_q^{(1)}(x)$ and $\lambda_q^{(1)}$.
\par
Finally, we make a remark about BC's.  In order to satisfy \eqref{eq:PsiBC,l} and \eqref{eq:PsiBC,r}, we must also impose the following BC's
\begin{align}
\Psi_q^{(0)}(l) = \Psi_q^{(1)}(l)
	&=	0 \quad \text{if} \quad l > -\infty , \label{eq:BC,Psi0Psi1,l} \\
\Psi_q^{(0)}(r) = \Psi_q^{(1)}(r)
	&=	0 \quad \text{if} \quad r < \infty .	\label{eq:BC,Psi0Psi1,r}
\end{align}
\par
This is as far as we shall take the asymptotic analysis.  The key results of this analysis equations \eqref{eq:center1} and \eqref{eq:eigen1} , which can be used to find expressions for $\Psi_q^{(0)}(x), \lambda_q^{(0)}$ and $\Psi_q^{(1)}(x), \lambda_q^{(1)}$.  We shall present these expressions in Proposition \ref{prop:Psi0} and Theorem \ref{thm:Psi1}.  Before doing so, however, we establish some key facts about the eigenfunctions $\Psi_q^{(0)}(x)$ of \eqref{eq:center1}.
\begin{theorem} \label{thm:eigen0}
The eigenfunctions $\Psi_q^{(0)}(x)$ of equation \eqref{eq:center1} form a complete orthonormal basis in the Hilbert space $\H:=L^2(I,s)$ where
\begin{align}
s(x) \, dx
	&= \frac{2}{\sigma^2} e^{2 c x} dx , &
c
	&=	\frac{\mu-\sigma^2/2}{\sigma^2} , &
\l( u,v \r)_s
	&=	\int_l^r \overline{u(x)} \, v(x) \, s(x) \, dx	.
\end{align}
In the case where both $l$ and $r$ are finite, the spectrum is discrete.  In all other cases the spectrum is absolutely continuous with respect to the Lebesgue measure.
\end{theorem}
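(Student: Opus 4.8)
The plan is to recognize \eqref{eq:center1} as a classical one-dimensional Sturm--Liouville eigenvalue problem and, by a Liouville transformation, reduce it to the spectral problem for the free Laplacian on $I$, for which every assertion in the statement is textbook.

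First I would put $\<\L^{(0)}\>$ into divergence (self-adjoint) form. Since its coefficients are constant, a direct check shows
\[ \<\L^{(0)}\> u = \frac{1}{s(x)}\,\frac{d}{dx}\!\left( e^{2cx}\,\frac{du}{dx} \right) , \]
with $s$ and $c$ exactly as in the statement. Consequently, for $u,v$ smooth and compactly supported in $I$ --- and more generally for $u,v$ obeying the Dirichlet conditions \eqref{eq:BC,Psi0Psi1,l}--\eqref{eq:BC,Psi0Psi1,r} at any finite endpoint, together with the appropriate decay at an infinite endpoint --- integration by parts yields $\left(\<\L^{(0)}\>u,v\right)_s=\left(u,\<\L^{(0)}\>v\right)_s$, with no surviving boundary terms. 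Thus $\<\L^{(0)}\>$ is symmetric on $\H=L^2(I,s)$ for the stated inner product, and \eqref{eq:center1} is the eigenvalue problem for its self-adjoint realization: the Dirichlet conditions at finite endpoints, and --- as recorded below --- Weyl's limit-point case, so no boundary condition at all, at an infinite endpoint.

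Next I would apply the substitution $w(x)=e^{cx}u(x)$. A short computation removes the first-order term and gives $e^{cx}\,\<\L^{(0)}\>\!\left(e^{-cx}w\right)=\tfrac12\sig^2\!\left(\d^2_{xx}-c^2\right)w$, while at the same time $\left(u,v\right)_s=\tfrac{2}{\sig^2}\int_I \overline{w_u}\,w_v\,dx$. Hence the map $Tu:=(2/\sig^2)^{1/2}\,e^{cx}u$ is a unitary isomorphism of $\H$ onto $L^2(I,dx)$ which conjugates $\<\L^{(0)}\>$ into $-\tfrac12\sig^2\left(-\d^2_{xx}\right)-\tfrac12\sig^2 c^2$, i.e.\ into a negative affine function of the one-dimensional Dirichlet Laplacian on $I$. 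Because unitary equivalence preserves the spectrum together with its spectral type and carries complete orthonormal systems to complete orthonormal systems, the theorem reduces to the classical facts about $-\d^2_{xx}$ on $I$. If $I=(l,r)$ is bounded, the Dirichlet Laplacian has compact resolvent, hence purely discrete spectrum $\left\{(n\pi/(r-l))^2 : n\geq1\right\}$ whose eigenfunctions $\sin\!\left(n\pi(x-l)/(r-l)\right)$ form a complete orthonormal basis of $L^2(I,dx)$ (Fourier sine series); transported back through $T$ this gives, in particular, $\lambda_n^{(0)}=-\tfrac12\sig^2\!\left((n\pi/(r-l))^2+c^2\right)$. If instead $l$ or $r$ is infinite, that endpoint is in Weyl's limit-point case (for non-real spectral parameter precisely one solution of $-w''=\mu w$ is square-integrable near it), so $-\d^2_{xx}$ is essentially self-adjoint on $C_c^\infty(I)$, its spectrum is purely absolutely continuous and equal to $[0,\infty)$, and the associated generalized eigenfunctions (plane waves on $\R$, sine waves on a half-line) diagonalize it through the Fourier, respectively sine, transform. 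Pulling these statements back through $T$ produces the functions $\Psi_q^{(0)}$, their orthonormality $\left(\Psi_m^{(0)},\Psi_n^{(0)}\right)_s=\delta_{mn}$ in the discrete case (a $\delta$-normalization in the continuous case), completeness in $\H$, and precisely the discrete / absolutely-continuous dichotomy claimed.

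The only genuinely delicate point is the behaviour at an infinite endpoint: one must verify the limit-point classification, both so that the self-adjoint realization of $\<\L^{(0)}\>$ needs no boundary condition there and so that in the continuous-spectrum case the word ``basis'' is read in the generalized sense --- the $\Psi_\nu^{(0)}$ are then not members of $\H$, and ``complete orthonormal basis'' abbreviates the resolution of the identity furnished by the Fourier (or sine) transform, consistent with the footnote. Everything else is routine once the Liouville reduction to $-\d^2_{xx}$ is in place, and that reduction is essentially the whole argument.
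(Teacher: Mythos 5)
Your proof is correct, and its skeleton is the same as the paper's: rewrite \eqref{eq:center1} in self-adjoint Sturm--Liouville form with weight $s(x)=\tfrac{2}{\sig^2}e^{2cx}$ (this is exactly the paper's equation \eqref{eq:SL}), and then invoke the classical regular/singular dichotomy. Where you genuinely go further is in \emph{how} the spectral classification is obtained. The paper stops at the Sturm--Liouville form and cites standard references for the completeness of the eigenfunctions in the regular case and for ``sufficient conditions one may check'' to see that the spectrum is absolutely continuous and the improper eigenfunctions are complete in the singular case. You instead carry out the Liouville substitution $w=e^{cx}u$ explicitly, exhibiting the unitary map $T u = (2/\sig^2)^{1/2}e^{cx}u$ that conjugates $\<\L^{(0)}\>$ into $-\tfrac{1}{2}\sig^2 c^2+\tfrac{1}{2}\sig^2\d^2_{xx}$ with Dirichlet conditions; the theorem then reduces to compactness of the resolvent of the Dirichlet Laplacian on a bounded interval and to the Fourier/sine-transform diagonalization (after checking the Weyl limit-point condition) on unbounded ones. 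This buys a self-contained and verifiable argument in place of the paper's citations, and it also explains transparently where the shift $-\tfrac{1}{2}\sig^2c^2$ in the eigenvalues of Proposition \ref{prop:Psi0} comes from; your closing caveat that ``complete orthonormal basis'' must be read in the generalized (resolution-of-the-identity) sense in the continuous case matches the paper's footnote and its remark about improper eigenfunctions. No gaps.
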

Theorem \ref{thm:eigen0} is a standard result from Sturm-Liouville theory and can be found in any number of texts on differential equations \cite{al2008sturm, amrein2005sturm, stakgold2000boundary, zill2008differential, hinton1997spectral}.  A sketch of the proof of Theorem \ref{thm:eigen0} is as follows.  First, note that \eqref{eq:center1} may be recast in standard Sturm-Liouville form
\begin{align}
\d_x \l(e^{2 c x} \d_x \Psi_q^{(0)}(x)\r)
	&=	\lambda_q^{(0)} \, s(x) \, \Psi_q^{(0)}(x) ,	\label{eq:SL}
\end{align}
\par
Now, consider the case of finite $l$ and $r$.  In this case, \eqref{eq:BC,Psi0Psi1,l}, \eqref{eq:BC,Psi0Psi1,r} and \eqref{eq:SL} define a \emph{regular Sturm-Liouville problem}.   It is well-known (see for example \cite{zill2008differential, zwillinger1998handbook}) that the eigenvalues $\lambda_n^{(0)}$ of all regular Sturm-Liouville problems are discrete and the eigenfunctions $\Psi_n^{(0)}(x)$ form a complete orthonormal basis in $L^2(I,s)$.
\par
The situation is somewhat more complicated if either $l$, $r$ or both are infinite.  In this case \eqref{eq:BC,Psi0Psi1,l}, \eqref{eq:BC,Psi0Psi1,r} and \eqref{eq:SL} define a \emph{singular Sturm-Liouville problem}.  In general, the spectrum of a singular Sturm-Liouville problem may be discrete, continuous or mixed.  Additionally, the eigenfunctions of a singular Sturm-Liouville problem may be ``improper'', in the sense that they do not belong to $L^2(I,s)$.  Nevertheless, the eigenfunctions may still be used as a complete set of basis functions in the same sense that $\left\{e^{ikx} : k \in \R \right\}$ can be used as basis functions of a Fourier transform (see for example p. 161 of \cite{stakgold2000boundary} or p. 318 or \cite{hanson2002operator}).
\par
For a general second order linear operator $\L = a(x)\d^2_{xx} + b(x) \d_x + c(x)$ on some interval $I$ (possibly finite, infinite or semi-infinite), there exist sufficient conditions that one may check in order to classify the spectrum of the operator (see Chapter $22$ of \cite{zwillinger1998handbook}).  For $\<\L^{(0)}\>$, the linear second order operator considered in Theorem \ref{thm:eigen0}, a direct computation reveals that the spectrum is continuous when considered on infinite or semi-infinite intervals and the eigenfunctions -- while improper -- are complete in the appropriate Hilbert space.
\par
In the following Proposition we present an explicit solution $\left\{\Psi_q^{(0)}(x), \lambda_q^{(0)}\right\}$ to the above Sturm-Liouville problem.
\par
\begin{proposition}\label{prop:Psi0}
Depending on the interval $I =(l,r) $, the solution to the Sturm-Liouville problem defined by \eqref{eq:BC,Psi0Psi1,l}, \eqref{eq:BC,Psi0Psi1,r} and \eqref{eq:SL}---or equivalently \eqref{eq:center1}---is as follows:
\begin{align}
(a) \quad &-\infty < l < r < \infty , &
\Psi_n^{(0)}(x)
	&=		e^{-c x} \sqrt{\frac{\sigma^2}{r-l}}\sin \l( \alpha_n (x-l) \r) , &
\alpha_n
	&=  	\frac{n \pi}{r-l} , \label{eq:Psi0db} \\
& &
\lambda_n^{(0)}
	&=		- \frac{\sigma^2}{2} \l( c^2 + \alpha_n^2 \r) , &
n
	&\in	\mathbb{N} , \label{eq:lambda0db} \\
(b) \quad &-\infty = l < r = \infty , &
\Psi_\nu^{(0)}(x)
	&=	e^{-c x} \sqrt{\frac{\sigma^2}{4 \pi}} \exp \l( i \nu x \r) , \label{eq:Psi0eu} \\
& &
\lambda_\nu^{(0)}
	&=	- \frac{\sigma^2}{2}\l(c^2 + \nu^2 \r) , &
\nu
	&\in  \R ,  \label{eq:lambda0eu} \\
(c) \quad &-\infty = l < r < \infty  , &
\Psi_\nu^{(0)}(x)
	&=	e^{-c x} \sqrt{\frac{\sigma^2}{\pi}} \sin \l( \nu (x-r) \r) , \label{eq:Psi0uo} \\
& &
\lambda_\nu^{(0)}
	&=	- \frac{\sigma^2}{2}\l(c^2 + \nu^2 \r) , &
\nu
	&\in  \R_+ , \label{eq:lambda0uo} \\
(d) \quad &-\infty < l < r = \infty , &
\Psi_\nu^{(0)}(x)
	&=	e^{-c x}  \sqrt{\frac{\sigma^2}{\pi}} \sin \l( \nu (x-l) \r) , \\
& &
\lambda_n^{(0)}
	&=	- \frac{\sigma^2}{2}\l(c^2 + \nu^2 \r) , &
\nu
	&\in  \R_+ ,
\end{align}
where $c$ is defined in Theorem \ref{thm:eigen0}.
\end{proposition}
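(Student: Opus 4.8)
The plan is to reduce the second–order ODE \eqref{eq:SL} (equivalently \eqref{eq:center1}) to a constant–coefficient equation, solve it in closed form, and then impose the boundary conditions \eqref{eq:BC,Psi0Psi1,l}--\eqref{eq:BC,Psi0Psi1,r} separately in each of the four configurations of $I=(l,r)$. The key first move is the substitution $\Psi_q^{(0)}(x)=e^{-cx}\,\psi_q(x)$, which is natural because the factor $e^{-cx}$ appears in every formula of the Proposition. Using $\mu-\tfrac12\sigma^2=c\,\sigma^2$, a direct computation collapses \eqref{eq:center1} to the normal form $\tfrac12\sigma^2\,\psi_q''=\bigl(\lambda_q^{(0)}+\tfrac12\sigma^2 c^2\bigr)\psi_q$, i.e. $\psi_q''=\bigl(\tfrac{2\lambda_q^{(0)}}{\sigma^2}+c^2\bigr)\psi_q$.

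Next I would identify the admissible eigenvalues. Unless $\tfrac{2\lambda_q^{(0)}}{\sigma^2}+c^2<0$ there is no nontrivial admissible solution: on a finite interval a nonzero solution of $\psi''=\kappa^2\psi$ with $\kappa^2\ge 0$ cannot vanish at two distinct points, and on a half-line or the whole line such solutions fail to remain bounded and so are not admissible (even as generalized eigenfunctions). Writing $\tfrac{2\lambda_q^{(0)}}{\sigma^2}+c^2=-\omega^2$ with $\omega>0$ yields $\lambda_q^{(0)}=-\tfrac{\sigma^2}{2}(c^2+\omega^2)$ — the common form of \eqref{eq:lambda0db}, \eqref{eq:lambda0eu}, \eqref{eq:lambda0uo} — and $\psi_q$ is a combination of $\sin(\omega x),\cos(\omega x)$ (or of $e^{\pm i\omega x}$). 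Imposing the boundary data then gives: in case (a), $\psi_n(l)=\psi_n(r)=0$ forces $\psi_n\propto\sin(\alpha_n(x-l))$ with $\alpha_n(r-l)=n\pi$, hence the discrete spectrum $\alpha_n=n\pi/(r-l)$, $n\in\mathbb{N}$, consistent with the regular Sturm--Liouville case of Theorem \ref{thm:eigen0}; in case (b), with no boundary condition but requiring boundedness, one takes $\psi_\nu(x)=e^{i\nu x}$, $\nu\in\mathbb{R}$; in case (c) one imposes only $\psi_\nu(r)=0$, giving $\psi_\nu\propto\sin(\nu(x-r))$ with $\nu\in\mathbb{R}_+$ (positive $\nu$ already exhausts the family since $\sin$ is odd); case (d) is the mirror image, $\psi_\nu\propto\sin(\nu(x-l))$.

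Finally I would fix the multiplicative constants by normalization in $\H=L^2(I,s)$. Because $s(x)\,dx=\tfrac{2}{\sigma^2}e^{2cx}dx$ and $\Psi_q^{(0)}=e^{-cx}\psi_q$, the weight exactly cancels the $e^{-2cx}$ coming from $|\Psi_q^{(0)}|^2$, so the inner product reduces to the flat Lebesgue inner product of the $\psi_q$'s. In case (a), $\int_l^r\tfrac{2}{r-l}\sin^2(\alpha_n(x-l))\,dx=1$ pins down the constant $\sqrt{\sigma^2/(r-l)}$. In cases (b)--(d) the orthonormality is in the distributional sense: e.g. $(\Psi_\nu^{(0)},\Psi_{\nu'}^{(0)})_s=\tfrac{1}{2\pi}\int_{\mathbb{R}}e^{i(\nu'-\nu)x}dx=\delta(\nu-\nu')$ in case (b), and (after the substitution $u=r-x$ or $u=x-l$) $\tfrac{2}{\pi}\int_0^\infty\sin(\nu u)\sin(\nu' u)\,du=\delta(\nu-\nu')$ in cases (c)--(d), which fixes the constants $\sqrt{\sigma^2/(4\pi)}$ and $\sqrt{\sigma^2/\pi}$ respectively.

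The computational content is routine once the change of variables $\Psi=e^{-cx}\psi$ is in hand; the genuine subtlety — and the step I would treat most carefully — is the interpretation of the continuous–spectrum cases (b), (c), (d), where the $\Psi_\nu^{(0)}$ are improper (not in $L^2(I,s)$) and ``complete orthonormal basis'' must be understood in the generalized sense of a direct–integral/spectral–measure decomposition. This is precisely the point already flagged in the discussion following Theorem \ref{thm:eigen0}; it is handled by appealing to the standard Weyl--Titchmarsh theory for singular Sturm--Liouville operators, and for the purposes of this paper it is enough that the $\Psi_\nu^{(0)}$ are delta–normalized and complete in the same sense that $\{e^{i\nu x}:\nu\in\mathbb{R}\}$ is complete for the Fourier transform.
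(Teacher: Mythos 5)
Your proposal is correct and matches the paper's argument in substance: the paper's proof simply states that a direct calculation verifies that the listed $\Psi_q^{(0)}$ and $\lambda_q^{(0)}$ satisfy \eqref{eq:SL} with the boundary conditions \eqref{eq:BC,Psi0Psi1,l}--\eqref{eq:BC,Psi0Psi1,r} and that orthonormality and completeness in $L^2(I,s)$ can be checked, which is exactly the computation you carry out. Your version is somewhat more constructive (deriving rather than verifying the formulas via the substitution $\Psi_q^{(0)}=e^{-cx}\psi_q$ and fixing the normalization, with the delta-normalization in the continuous-spectrum cases made explicit), but it is the same approach.
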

\begin{proof}
A direct calculation shows that the above eigenvalues and eigenfunctions satisfy \eqref{eq:BC,Psi0Psi1,l}, \eqref{eq:BC,Psi0Psi1,r} and \eqref{eq:SL}.  One can easily verify that the eigenfunctions form a complete orthonormal basis in the corresponding Hilbert spaces.  \hfill
\end{proof}
Having found explicit expressions for $\Psi_q^{(0)}(x)$ and  $\lambda_q^{(0)}$, and having established the completeness of $\left\{\Psi_q^{(0)}(x)\right\}$ in $L^2\l(I,s\r)$ we are now able to present our solution to \eqref{eq:eigen1}, \eqref{eq:BC,Psi0Psi1,l} and \eqref{eq:BC,Psi0Psi1,r}.
\begin{theorem} \label{thm:Psi1}
Suppose
\begin{align}
\l( \Psi_m^{(0)},\A^{(1)} \, \Psi_n^{(0)} \r)_s
	&=	C^{(1)}(m,n) \, \I_{\left\{ m \neq n \right\}} + D^{(1)}(n) \, \delta_{m,n} , &\textup{(discrete spectrum)} \\
\l( \Psi_\om^{(0)},\A^{(1)} \, \Psi_\nu^{(0)} \r)_s
	&=	C^{(1)}(\om,\nu) \, \I_{\left\{ \om \neq \nu \right\}} + D^{(1)}(\nu) \, \delta(\om-\nu) .	&\textup{(continuous spectrum)}
\end{align}
Then the solution to equation \eqref{eq:eigen1} with BC's \eqref{eq:BC,Psi0Psi1,l} and \eqref{eq:BC,Psi0Psi1,r}
is given by
\begin{align}
\Psi_n^{(1)}(x)
	&=	\sum_{m} a_{n,m}^{(1)} \Psi_m^{(0)}(x) , &
a_{n,m}^{(1)}
	&=	\frac{C^{(1)}(m,n)}{\lambda_n^{(0)}-\lambda_m^{(0)}} \, \I_{\left\{ m \neq n \right\}} , \\
\lambda_n^{(1)}
	&=	D^{(1)}(n)  & &\textup{(discrete spectrum)} \label{eq:Psi1,lambda1} \\
\Psi_\nu^{(1)}(x)
	&=	\int a_{\nu,\om}^{(1)} \Psi_\om^{(0)}(x) d\om, &
a_{\nu,\om}^{(1)}
	&=	\frac{C^{(1)}(\om,\nu)}{\lam_\nu^{(0)}-\lam_\om^{(0)}} \, \I_{\left\{ \om \neq \nu \right\}} , \\
\lam_\nu^{(1)}
	&=	D^{(1)}(\nu) & &\textup{(continuous spectrum)} \label{eq:Psi1,lambda1,cont} 
\end{align}
\end{theorem}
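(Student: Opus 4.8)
The plan is to recognize equation \eqref{eq:eigen1} as the standard inhomogeneous first-order equation of Rayleigh--Schr\"{o}dinger perturbation theory and to solve it by expanding the unknown $\Psi_n^{(1)}$ in the complete orthonormal basis $\{\Psi_m^{(0)}\}$ furnished by Theorem \ref{thm:eigen0}. Concretely, I would write $\Psi_n^{(1)}(x) = \sum_m a_{n,m}^{(1)}\Psi_m^{(0)}(x)$ with coefficients $a_{n,m}^{(1)}$ to be determined, and observe immediately that, since every $\Psi_m^{(0)}$ already satisfies the homogeneous conditions \eqref{eq:BC,Psi0Psi1,l}--\eqref{eq:BC,Psi0Psi1,r}, so does any such linear combination; hence those boundary conditions place no further constraint on the $a_{n,m}^{(1)}$, and it remains only to choose the $a_{n,m}^{(1)}$ and $\lambda_n^{(1)}$ so that \eqref{eq:eigen1} holds.

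Next I would substitute this ansatz into \eqref{eq:eigen1} and pair both sides with $\Psi_m^{(0)}$ in the inner product $\l( \cdot,\cdot \r)_s$. The structural fact to exploit is that $\<\L^{(0)}\>$ is self-adjoint on $\H = L^2(I,s)$, with the $\Psi_k^{(0)}$ as its (possibly generalized) eigenfunctions for the real eigenvalues $\lambda_k^{(0)}$ --- this is precisely the Sturm-Liouville content of Theorem \ref{thm:eigen0} together with the explicit formulas of Proposition \ref{prop:Psi0}. Self-adjointness lets me move $\<\L^{(0)}\>$ onto $\Psi_m^{(0)}$, so that $\l( \Psi_m^{(0)},(\lambda_n^{(0)}-\<\L^{(0)}\>)\Psi_n^{(1)} \r)_s$ becomes $(\lambda_n^{(0)}-\lambda_m^{(0)})\,a_{n,m}^{(1)}$; orthonormality collapses the remaining terms, and the hypothesis on $\l( \Psi_m^{(0)},\A^{(1)}\Psi_n^{(0)} \r)_s$ supplies the left-hand side. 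The projected equation then reads $C^{(1)}(m,n)\I_{\{m\ne n\}} + D^{(1)}(n)\delta_{m,n} = (\lambda_n^{(0)}-\lambda_m^{(0)})a_{n,m}^{(1)} + \lambda_n^{(1)}\delta_{m,n}$.

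Reading off the diagonal term $m=n$ gives $\lambda_n^{(1)} = D^{(1)}(n)$, and for $m\ne n$ --- using that the eigenvalues $\lambda_m^{(0)}$ of Proposition \ref{prop:Psi0} are pairwise distinct on the relevant index set --- I may divide to obtain $a_{n,m}^{(1)} = C^{(1)}(m,n)/(\lambda_n^{(0)}-\lambda_m^{(0)})$. The diagonal coefficient $a_{n,n}^{(1)}$ is left unconstrained by \eqref{eq:eigen1}, reflecting the freedom to add a multiple of $\Psi_n^{(0)}$ to $\Psi_n^{(1)}$ (a choice of normalization of $\Psi_n^\eps$); I fix it to $0$, which yields exactly the stated formulas, and a short converse check --- substituting these $a_{n,m}^{(1)}$ and $\lambda_n^{(1)}$ back and projecting --- confirms that \eqref{eq:eigen1} holds. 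The continuous-spectrum case proceeds identically, with $\sum_m\mapsto\int d\om$, $\delta_{m,n}\mapsto\delta(\om-\nu)$, and orthonormality replaced by $\delta$-normalization.

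The only real obstacle is the analytic bookkeeping in the singular/continuous setting: justifying the termwise action of the differential operators $\<\L^{(0)}\>$ and $\A^{(1)}$ on the (possibly improper, non-$L^2$) expansions, the convergence of the series or integral defining $\Psi_n^{(1)}$, and the non-vanishing of $\lambda_\nu^{(0)}-\lambda_\om^{(0)}$ off the diagonal --- where one notes that the apparent degeneracy $\lambda_\nu^{(0)}=\lambda_{-\nu}^{(0)}$ in case (b) of Proposition \ref{prop:Psi0} is harmless, since there $\A^{(1)}$ acts diagonally on the Fourier modes so that $C^{(1)}(\om,\nu)\equiv 0$ for $\om\neq\nu$. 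These are the same caveats already acknowledged in the footnote to \eqref{eq:representation}, and I would dispatch them at the level of rigor used elsewhere in the paper rather than belabor them.
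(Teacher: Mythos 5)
Your proposal is correct and follows essentially the same route as the paper: expand $\Psi_n^{(1)}$ in the complete orthonormal basis $\{\Psi_m^{(0)}\}$, project equation \eqref{eq:eigen1} onto $\Psi_m^{(0)}$ in $(\cdot,\cdot)_s$, and read off $a_{n,m}^{(1)}$ and $\lambda_n^{(1)}$ from the resulting relation $C^{(1)}(m,n)\,\I_{\{m\neq n\}}+D^{(1)}(n)\,\delta_{m,n}=(\lambda_n^{(0)}-\lambda_m^{(0)})a_{n,m}^{(1)}+\lambda_n^{(1)}\delta_{m,n}$. The only cosmetic difference is that you invoke self-adjointness of $\<\L^{(0)}\>$ to shift the operator onto the test function, whereas the paper applies it termwise to the expansion; your added remarks on the normalization freedom in $a_{n,n}^{(1)}$ and the degeneracy $\lambda_\nu^{(0)}=\lambda_{-\nu}^{(0)}$ in case (b) are sound refinements the paper leaves implicit.
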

\begin{proof}
By Theorem \ref{thm:eigen0} the spectrum of $\< \L^{(0)}\>$ is either discrete or absolutely continuous and the eigenfunctions form a complete orthonormal basis in $L^2(I,s)$.  We consider the discrete spectrum case.  For every $n \in \N$, the function $\Psi_n^{(1)}(x)$ may be expressed as a linear combination of basis functions
\begin{align}
\Psi_n^{(1)}
	&=	\sum_{k} a_{n,k}^{(1)}\Psi_k^{(0)} . \label{eq:LinearCombo}
\end{align}
Inserting \eqref{eq:LinearCombo} into \eqref{eq:eigen1} yields
\begin{align}
\A^{(1)} \, \Psi_n^{(0)}
	&=	 \sum_{k} a_{n,k}^{(1)} \l( \lambda_n^{(0)} - \lambda_k^{(0)} \r)\Psi_k^{(0)} + \lambda_n^{(1)} \Psi_n^{(0)} .
\end{align}
Multiplying both sides by $ \overline{\Psi_m^{(0)}(x)}s(x)$ and integrating with respect to $x$ we find
\begin{align}
\l( \Psi_m^{(0)}, \A^{(1)} \, \Psi_n^{(0)} \r)_s
	&=	\sum_{k} a_{n,k}^{(1)} \l( \lambda_n^{(0)} - \lambda_k^{(0)} \r) \l( \Psi_m^{(0)}, \Psi_k^{(0)}\r)_s 
			+ \lambda_n^{(1)} \l( \Psi_m^{(0)}, \Psi_n^{(0)} \r)_s \\
C^{(1)}(m,n)  \, \I_{\left\{ m \neq n \right\}} + D^{(1)}(n) \, \delta_{m,n}
	&=	a_{n,m}^{(1)} \l( \lambda_n^{(0)} - \lambda_m^{(0)} \r)
			+ \lambda_n^{(1)} \delta_{m,n} . \label{eq:proof}
\end{align}
Equation \eqref{eq:proof} is satisfied for all $m$ and $n$ by choosing $a_{n,m}^{(1)}$ and $\lambda_n^{(1)}$ as in \eqref{eq:Psi1,lambda1}.  Note that BC's \eqref{eq:BC,Psi0Psi1,l} and \eqref{eq:BC,Psi0Psi1,r} are satisfied by construction.  The proof in the continuous spectrum case is analogous.  \hfill
\end{proof}

\subsection{Option Prices}\label{sec:prices}
We have now found expressions for the approximate eigenvalues $\lambda_q^\eps \approx \lambda_q^{(0)} + \sqrt{\eps} \, \lambda_q^{(1)}$ and approximate eigenfunctions $\Psi_q^\eps(x,y) \approx \Psi_q^{(0)}(x) + \sqrt{\eps} \, \Psi_q^{(1)}(x)$ of eigenvalue problem \eqref{eq:EigenEq}, \eqref{eq:PsiBC,l} and \eqref{eq:PsiBC,r}.  We now use these expressions to specify the approximate price $u^\eps(t,x,y) \approx u^{(0)}(t,x) + \sqrt{\eps} \, u^{(1)}(t,x)$ of an option.  This serves as the main result of our work.
\\
\begin{theorem} \label{thm:main}
The approximate price of an option is given by
\begin{align}
u^\eps(t,x,y) \approx u^{(0)}(t,x) + \sqrt{\eps} \, u^{(1)}(t,x) ,
\end{align}
where
\begin{align}
u^{(0)}
	&=	\begin{cases}
			\sum_n A_n^{(0)} g_n^{(0)} \Psi_n^{(0)}  & \textup{(discrete spectrum)} \\
			\int A_\nu^{(0)} g_\nu^{(0)} \Psi_\nu^{(0)} d\nu  & \textup{(continuous spectrum)}
			\end{cases} , \label{eq:u0}
\end{align}
and
\begin{align}
u^{(1)}
	&=	\begin{cases}
			\sum_n \l( A_n^{(1)} g_n^{(0)} \Psi_n^{(0)} + A_n^{(0)} g_n^{(1)} \Psi_n^{(0)} + A_n^{(0)} g_n^{(0)} \Psi_n^{(1)} \r) 
			&\textup{(discrete spectrum)} \\
			\int \l( A_\nu^{(1)} g_\nu^{(0)} \Psi_\nu^{(0)} + A_\nu^{(0)} g_\nu^{(1)} \Psi_\nu^{(0)} + A_\nu^{(0)} g_\nu^{(0)} \Psi_\nu^{(1)} \r) d\nu 		
			&\textup{(continuous spectrum)}
			\end{cases} . \label{eq:u1}
\end{align}
Here,
\begin{align}
g_q^{(0)}(t)
	&=	\exp \l( \lambda_q^{(0)} \, t \r) , &
g_q^{(1)}(t)
	&=	\l( \lambda_q^{(1)} \, t \r) \exp \l( \lambda_q^{(0)} \, t \r) , \label{eq:g0g1}
\end{align}
and
\begin{align}
A_q^{(0)}
	&=	\l(\Psi_q^{(0)},h\r)_s , \label{eq:A0} \\
A_n^{(1)}
	&=	- \sum_m A_m^{(0)} \l( \Psi_n^{(0)}, \Psi_m^{(1)} \r)_s , &\textup{(discrete spectrum)} \label{eq:A1}	\\
A_\nu^{(1)}
	&=	- \int A_\om^{(0)} \l( \Psi_\nu^{(0)}, \Psi_\om^{(1)} \r)_s . &\textup{(continuous spectrum)} \label{eq:A1,cont}
\end{align}
The $\O \l( \eps^0\r)$ eigenfunctions $\Psi_q^{(0)}(x)$ and eigenvalues $\lam_q^{(0)}$ are given in Proposition \ref{prop:Psi0}, and their $\O \l( \eps^{1/2} \r)$ corrections $\Psi_q^{(1)}(x)$ and $\lam_q^{(1)}$are given in Theorem \ref{thm:Psi1}.  
\end{theorem}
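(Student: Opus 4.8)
The plan is to take the spectral representation \eqref{eq:representation} as the starting point and insert into it the $\sqrt{\eps}$-expansions already obtained for the eigenfunctions $\Psi_q^\eps$ and eigenvalues $\lam_q^\eps$, together with an as-yet-undetermined expansion $A_q^\eps = A_q^{(0)} + \sqrt{\eps}\,A_q^{(1)} + \O(\eps)$ of the coefficients, and then collect like powers of $\sqrt{\eps}$. The coefficients will be fixed afterward by imposing the initial condition \eqref{eq:uBC} and invoking the completeness and orthonormality of $\left\{\Psi_q^{(0)}\right\}$ in $\H = L^2(I,s)$ from Theorem \ref{thm:eigen0}.

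First I would expand the temporal factor $g_q^\eps(t) = \exp\l(\lam_q^\eps t\r)$: since $\lam_q^\eps = \lam_q^{(0)} + \sqrt{\eps}\,\lam_q^{(1)} + \O(\eps)$, Taylor-expanding the exponential gives $g_q^\eps(t) = e^{\lam_q^{(0)} t}\l(1 + \sqrt{\eps}\,\lam_q^{(1)} t + \O(\eps)\r) = g_q^{(0)}(t) + \sqrt{\eps}\,g_q^{(1)}(t) + \O(\eps)$, with $g_q^{(0)},g_q^{(1)}$ exactly as in \eqref{eq:g0g1}; note in particular $g_q^{(0)}(0)=1$ and $g_q^{(1)}(0)=0$, which will be needed when matching the initial data. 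Substituting the three expansions (for $A_q^\eps$, $g_q^\eps$, $\Psi_q^\eps$) into \eqref{eq:representation} and reading off the $\O(\eps^0)$ part yields \eqref{eq:u0}, while the $\O(\eps^{1/2})$ part yields the three-term expression \eqref{eq:u1} — the three terms being precisely the three ways of taking one factor at order $\sqrt{\eps}$ and the other two at order one. Because $\Psi_q^{(0)}$ and $\Psi_q^{(1)}$ were constructed $y$-independent in section \ref{sec:eigen}, $u^{(0)}$ and $u^{(1)}$ depend only on $(t,x)$; and because $\Psi_q^{(0)}$ and $\Psi_q^{(1)}$ vanish at any finite endpoint by \eqref{eq:BC,Psi0Psi1,l}--\eqref{eq:BC,Psi0Psi1,r}, the barrier conditions \eqref{eq:BC,l}--\eqref{eq:BC,r} hold for $u^{(0)}$ and $u^{(1)}$ term by term, which is the structural advantage of the spectral approach; likewise each summand $e^{\lam_q^\eps t}\Psi_q^\eps$ solves \eqref{eq:uPDE} exactly by \eqref{eq:EigenEq}.

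Next I would determine the constants $A_q^{(0)}$ and $A_q^{(1)}$. Imposing $u^\eps(0,x,y) = h(x)$ from \eqref{eq:uBC} and matching powers of $\sqrt{\eps}$ gives $u^{(0)}(0,x) = h(x)$ at leading order and $u^{(1)}(0,x) = 0$ at the next order. Using $g_n^{(0)}(0) = 1$, the first relation becomes $h = \sum_n A_n^{(0)} \Psi_n^{(0)}$ (an integral in the continuous-spectrum case); completeness lets me expand $h$ in the basis $\left\{\Psi_q^{(0)}\right\}$, and orthonormality then gives $A_q^{(0)} = \l(\Psi_q^{(0)}, h\r)_s$, i.e. \eqref{eq:A0}. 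Using $g_n^{(0)}(0) = 1$ and $g_n^{(1)}(0) = 0$, the second relation becomes $0 = \sum_n \l(A_n^{(1)}\Psi_n^{(0)} + A_n^{(0)}\Psi_n^{(1)}\r)$; projecting against $\Psi_m^{(0)}$ and using orthonormality gives $A_m^{(1)} = -\sum_n A_n^{(0)}\l(\Psi_m^{(0)}, \Psi_n^{(1)}\r)_s$, which is \eqref{eq:A1}, with the continuous-spectrum formula \eqref{eq:A1,cont} obtained identically. Combined with Proposition \ref{prop:Psi0} and Theorem \ref{thm:Psi1} for the explicit $\Psi_q^{(0)}, \lam_q^{(0)}$ and $\Psi_q^{(1)}, \lam_q^{(1)}$, this assembles every formula in the statement.

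The hard part will be analytic rather than algebraic: justifying the interchange of the $\sqrt{\eps}$-expansion with the spectral sum (or integral), and controlling convergence of the truncated series — especially in the continuous-spectrum case, where the $\Psi_\nu^{(0)}$ are improper eigenfunctions that do not lie in $\H$. At the level of the present theorem I would carry out these manipulations termwise and regard the result as the \emph{definition} of the approximate price; the rigorous estimate $u^\eps - u^{(0)} - \sqrt{\eps}\,u^{(1)} = \O(\eps)$ in a suitable norm is a separate matter, deferred to Theorem \ref{thm:accuracy}. A secondary point I would flag is that $\Psi_q^\eps$ acquires $y$-dependence only at $\O(\eps)$, so prescribing the $y$-independent data $h(x)$ is consistent through the order retained here, whereas matching it at $\O(\eps)$ and beyond would require the higher-order coefficients to absorb that $y$-dependence.
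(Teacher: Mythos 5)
Your proposal is correct and follows essentially the same route as the paper: expand $A_q^\eps$, $g_q^\eps$, and $\Psi_q^\eps$ in powers of $\sqrt{\eps}$ inside the spectral representation \eqref{eq:representation}, collect terms, and fix $A_q^{(0)}$, $A_q^{(1)}$ by projecting the initial condition $u^{(0)}(0,x)=h(x)$, $u^{(1)}(0,x)=0$ onto the orthonormal basis $\left\{\Psi_q^{(0)}\right\}$. Your added remarks on termwise boundary conditions and on deferring the convergence and accuracy questions to Theorem \ref{thm:accuracy} are consistent with the paper's treatment.
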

\begin{proof}
Consider the spectral representation of $u^\eps(t,x,y)$ given by \eqref{eq:representation}.  We expand $A_q^\eps$ and $g_q^\eps(t)$ in powers of $\sqrt{\eps}$
\begin{align}
A_q^\eps
	&= A_q^{(0)} + \sqrt{\eps} \, A_q^{(1)} + \ldots , \\
g_q^\eps(t)
	&= g_q^{(0)}(t) + \sqrt{\eps} \, g_q^{(1)}(t) + \ldots .
\end{align}
Inserting these expansions as well as the expansion for $\Psi_q^\eps$ into \eqref{eq:representation} and collecting terms of like-powers of $\sqrt{\eps}$ yields \eqref{eq:u0} at $\O\l(\eps^0\r)$and \eqref{eq:u1} at $\O\l(\eps^{1/2}\r)$.  The expressions \eqref{eq:g0g1} are obtained from \eqref{eq:representation} by performing a Taylor series of $g_n^\eps(t)$ about $\sqrt{\eps}=0$.  Expressions in \eqref{eq:A0}, \eqref{eq:A1} and \eqref{eq:A1,cont} can be obtained from the BC $u^\eps(0,x,y)=h(x)$.  We make the choice $u^{(0)}(0,x)=h(x)$ and $u^{(1)}(0,x)=0$, which is consistent with the choice made in \cite{fouque}.  Temporarily specializing to the discrete spectrum case we note
\begin{align}
u^{(0)}(0,x)
	&=	h(x) = \sum_m A_m^{(0)} \Psi_m^{(0)}(x)  &\Rightarrow & &
\l( \Psi_n^{(0)},h \r)
	&=	\sum_m A_m^{(0)} \l( \Psi_n^{(0)} , \Psi_m^{(0)}\r)_s = A_n^{(0)} .
\end{align}
Likewise
\begin{align}
& &
u^{(1)}(0,x)=	0 
	&= \sum_m \l( A_m^{(1)} \Psi_m^{(0)}(x) + A_m^{(0)} \Psi_m^{(1)}(x)\r) \\
&\Rightarrow&
0
	&=	\sum_m \l( A_m^{(1)} \l( \Psi_n^{(0)}, \Psi_m^{(0)}\r)_s + A_m^{(0)} \l( \Psi_n^{(0)}, \Psi_m^{(1)}\r)_s \r) \\
& &
	&=	A_n^{(1)} + \sum_m A_m^{(0)} \l( \Psi_n^{(0)}, \Psi_m^{(1)}\r)_s \\
&\Rightarrow&
A_n^{(1)}
	&=	- \sum_m A_m^{(0)} \l( \Psi_n^{(0)}, \Psi_m^{(1)}\r)_s	.
\end{align}
The continuous spectrum case is analogous. \hfill
\end{proof}
\begin{corollary} \label{cor:Vs}
The function $\sqrt{\eps}\,u^{(1)}(t,x)$ is linear in the group parameters
\begin{align}
V_2^\eps
	&:= \sqrt{\eps}\,\frac{\ups}{\sqrt{2}} \<\Lambda \phi'\> = \sqrt{\eps}\,V_2,	 &
V_3^\eps
	&:= -\sqrt{\eps}\,\frac{ \rho \ups}{\sqrt{2}} \<f \phi'\> =  \sqrt{\eps}\,V_3. \label{eq:Veps}
\end{align}
\end{corollary}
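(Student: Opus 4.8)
The plan is to trace the dependence on $V_2$ and $V_3$ through the chain of formulas defining $u^{(1)}$ in Theorem~\ref{thm:main}, using the fact that these two parameters enter the entire asymptotic analysis \emph{only} through the operator $\A^{(1)}$. Indeed, from \eqref{eq:A} the operator $\A^{(1)} = V_3\l(\d^3_{xxx} - \d^2_{xx}\r) + V_2\l(\d^2_{xx} - \d_x\r)$ is manifestly a homogeneous linear function of the pair $(V_2,V_3)$, whereas the $\O(\eps^0)$ data --- the eigenfunctions $\Psi_q^{(0)}$ and eigenvalues $\lambda_q^{(0)}$ of Proposition~\ref{prop:Psi0}, the weight $s$, and hence the coefficients $A_q^{(0)} = (\Psi_q^{(0)},h)_s$ and the functions $g_q^{(0)}(t) = e^{\lambda_q^{(0)} t}$ --- do not depend on $V_2$ or $V_3$ at all.

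First I would record that the pairings $C^{(1)}(m,n)$ and $D^{(1)}(n)$ (and their continuous-spectrum analogues $C^{(1)}(\om,\nu)$, $D^{(1)}(\nu)$) of Theorem~\ref{thm:Psi1}, being the off-diagonal and diagonal parts of $\l(\Psi_m^{(0)},\A^{(1)}\Psi_n^{(0)}\r)_s$, inherit the homogeneous linearity of $\A^{(1)}$ in $(V_2,V_3)$. From the formulas of Theorem~\ref{thm:Psi1} it then follows at once that $\lambda_q^{(1)} = D^{(1)}(q)$ is linear in $(V_2,V_3)$, and that the coefficients $a^{(1)}_{n,m} = C^{(1)}(m,n)/(\lambda_n^{(0)}-\lambda_m^{(0)})$ are linear in $(V_2,V_3)$ as well, the denominators being $V$-independent; hence $\Psi_q^{(1)} = \sum_m a^{(1)}_{q,m}\Psi_m^{(0)}$ is linear in $(V_2,V_3)$.

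Next I would propagate this through the remaining ingredients of $u^{(1)}$. By \eqref{eq:g0g1} the function $g_q^{(1)}(t) = \lambda_q^{(1)}\, t\, e^{\lambda_q^{(0)} t}$ is linear in $(V_2,V_3)$ because $\lambda_q^{(1)}$ is; and by \eqref{eq:A1}--\eqref{eq:A1,cont} the coefficient $A_q^{(1)}$, being a $V$-independent linear combination of the inner products $\l(\Psi_q^{(0)},\Psi_m^{(1)}\r)_s$, is linear in $(V_2,V_3)$ since each $\Psi_m^{(1)}$ is. Consequently each of the three summands in \eqref{eq:u1} is a product of a $V$-independent factor with a factor that is homogeneously linear in $(V_2,V_3)$, so the sum --- i.e. $u^{(1)}(t,x)$ --- is homogeneously linear in $(V_2,V_3)$. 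Multiplying by $\sqrt{\eps}$ and invoking \eqref{eq:Veps} rewrites this as linearity in $(V_2^\eps,V_3^\eps)$, which is the assertion.

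The only point requiring a word of care is that the argument rearranges the spectral sum (or integral) in \eqref{eq:u1} term by term; this is legitimate under the very convergence hypotheses already presupposed for the representation \eqref{eq:representation} to be meaningful, and I would simply remark that a convergent sum of homogeneously linear terms is itself homogeneously linear. Beyond that, the corollary is a bookkeeping consequence of Theorems~\ref{thm:main} and \ref{thm:Psi1} together with the explicit form of $\A^{(1)}$, and introduces no new estimates, so I do not anticipate any genuine obstacle.
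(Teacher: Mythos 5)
Your proposal is correct and follows essentially the same route as the paper's own proof: trace the linearity of $\A^{(1)}$ in $(V_2,V_3)$ through $C^{(1)}$, $D^{(1)}$, $\lambda_q^{(1)}$, $a^{(1)}_{n,m}$, $\Psi_q^{(1)}$, $g_q^{(1)}$ and $A_q^{(1)}$, and observe that the $\O(\eps^0)$ quantities are $V$-independent. Your version is merely more explicit about the intermediate steps and the term-by-term rearrangement, which the paper leaves implicit.
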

\begin{proof}
From \eqref{eq:A} we see that the operator $\A^{(0)}$ is linear in $V_2$ and $V_3$.  By Theorem \ref{thm:Psi1} it is clear that $\Psi_q^{(1)}(x)$, $a_{p,q}^{(1)}$, and $\lam_q^{(1)}$ are linear in $V_2$ and $V_3$ as are $g_q^{(1)}(t)$ and $A_q^{(1)}$ by \eqref{eq:g0g1}, \eqref{eq:A1} and \eqref{eq:A1,cont}.  Finally, because $V_2$ and $V_3$ do not appear in $\Psi_q^{(0)}(x)$, $\lam_q^{(0)}$ and $g_q^{(0)}$ it is clear from \eqref{eq:u1} that $u^{(1)}(t,x)$ is linear in $V_2$ and $V_3$.  Thus, $\sqrt{\eps}\,u^{(1)}(t,x)$ is linear in $V_2^\eps$ and $V_3^\eps$.  \hfill
\end{proof}

\subsection{Equivalence to Black-Scholes and to Fouque-Papanicolaou-Sircar \cite{fouque}}\label{sec:equivalence}
In this section, we will show that $u^{(0)}(t,x)$ corresponds to the Black-Scholes price of an option with Black-Scholes volatility equal to $\sqrt{\sig^2}$.  We will also show that $u^{(1)}(t,x)$, the $\O(\sqrt{\eps})$ correction to $u^{(0)}(t,x)$ due to fast mean-reversion of the volatility, is the same correction as that obtained in \cite{fouque}.  This equivalence relation will enable us to establish the accuracy of the pricing approximation $u^\eps(t,x,y) \approx u^{(0)}(t,x) + \sqrt{\eps} \, u^{(1)}(t,x)$ for the case of European options.\\
\begin{theorem}\label{thm:u0=uBS}
Let $u^{BS}(t,x)$ be the Black-Scholes price of an option with with payoff \eqref{eq:payoff} and let the underlying have volatility $\sqrt{\sig^2}$.  Then
\begin{align}
u^{BS}(t,x)
	&=	u^{(0)}(t,x) .
\end{align}
\end{theorem}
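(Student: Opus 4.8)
The plan is to show that $u^{(0)}(t,x)$ and the un-discounted Black--Scholes price $u^{BS}(t,x)$ solve one and the same initial-boundary-value problem, and then invoke uniqueness. First I would recall that, expressed in the un-discounted price and the time-to-maturity variable $t=T-s$ as in \eqref{eq:uAndP}, the Black--Scholes price of an option with payoff \eqref{eq:payoff} and constant volatility $\sqrt{\sig^2}$ solves
\begin{align}
0 &= \l( -\d_t + \<\L^{(0)}\> \r) u^{BS} , & h(x) &= u^{BS}(0,x) ,
\end{align}
supplemented by $u^{BS}(t,l)=0$ if $l>-\infty$ and $u^{BS}(t,r)=0$ if $r<\infty$. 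This is immediate: the log-price generator of a geometric Brownian motion with growth rate $\mu$ and volatility $\sqrt{\sig^2}$ is precisely $\<\L^{(0)}\> = \l(\mu-\tfrac12\sig^2\r)\d_x + \tfrac12\sig^2\d^2_{xx}$, the operator displayed just after \eqref{eq:center2}, obtained from \eqref{eq:Lxy} by dropping the $y$-dependence and replacing $f^2$ by its average $\sig^2$; the zero Dirichlet conditions at finite endpoints encode the knock-out feature.

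Next I would verify termwise that the series (or integral) \eqref{eq:u0} defining $u^{(0)}$ solves exactly this problem. Applying $-\d_t+\<\L^{(0)}\>$ to the generic term $A_n^{(0)}\,g_n^{(0)}(t)\,\Psi_n^{(0)}(x)$ and using $g_n^{(0)}(t)=e^{\lam_n^{(0)}t}$ together with the eigenvalue relation \eqref{eq:center1}, namely $\<\L^{(0)}\>\Psi_n^{(0)}=\lam_n^{(0)}\Psi_n^{(0)}$, gives $A_n^{(0)}\l(-\lam_n^{(0)}+\lam_n^{(0)}\r)g_n^{(0)}\Psi_n^{(0)}=0$, so $\l(-\d_t+\<\L^{(0)}\>\r)u^{(0)}=0$; the boundary conditions hold because each $\Psi_n^{(0)}$ already satisfies \eqref{eq:BC,Psi0Psi1,l}--\eqref{eq:BC,Psi0Psi1,r}. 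For the initial condition, $g_n^{(0)}(0)=1$ gives $u^{(0)}(0,x)=\sum_n A_n^{(0)}\Psi_n^{(0)}(x)$ with $A_n^{(0)}=\l(\Psi_n^{(0)},h\r)_s$ by \eqref{eq:A0}; since by Theorem~\ref{thm:eigen0} the $\Psi_n^{(0)}$ form a complete orthonormal basis of $\H=L^2(I,s)$, this is exactly the generalized Fourier expansion of $h$, so $u^{(0)}(0,x)=h(x)$. The continuous-spectrum case is handled identically, with sums replaced by integrals over $\nu$, $\delta_{m,n}$ replaced by $\delta(\om-\nu)$, and the completeness relation of Theorem~\ref{thm:eigen0} read as a generalized Fourier transform.

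Having exhibited $u^{(0)}$ and $u^{BS}$ as solutions of the same PDE with identical initial and boundary data, I would conclude $u^{(0)}=u^{BS}$ by uniqueness for this boundary-value problem, which is uniformly parabolic after the standard change of variables removing the $e^{2cx}$ weight. The only real work is analytic rather than conceptual: justifying the termwise action of $\d_t$ and $\<\L^{(0)}\>$ on \eqref{eq:u0} and the $L^2(I,s)$-convergence of the expansion of $h$ — routine in the discrete case (a regular Sturm--Liouville problem, Proposition~\ref{prop:Psi0}(a)), and requiring the usual care with improper eigenfunctions in the singular case — but in each instance it reduces to the classical fact that the eigenfunction-expansion solution of a heat-type equation coincides with its Feynman--Kac representation, here the Black--Scholes price. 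I expect this convergence/uniqueness bookkeeping to be the main (and essentially the only) obstacle.
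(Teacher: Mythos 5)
Your proposal is correct and follows essentially the same route as the paper's own proof: identify the Black--Scholes initial-boundary-value problem with generator $\L_X=\<\L^{(0)}\>$, verify termwise via $\<\L^{(0)}\>\Psi_q^{(0)}=\lam_q^{(0)}\Psi_q^{(0)}$ and $g_q^{(0)}(t)=e^{\lam_q^{(0)}t}$ that $u^{(0)}$ solves it, and conclude by uniqueness. Your additional remarks on completeness of the eigenfunctions for the initial condition and on justifying termwise differentiation make explicit some analytic bookkeeping that the paper leaves implicit, but the argument is the same.
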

\begin{proof}
In the Black-Scholes model, the underlying is assumed to follow geometric Brownian motion with risk-neutral drift $\mu$ and volatility $\sqrt{\sig^2}$.  The Black-Scholes price $u^{BS}(t,x)$ of a an option with payoff \eqref{eq:payoff} solves the following PDE with BC's
\begin{align}
0
	&=	\l( -\d_t + \L_X \r) u^{BS}, 	& & (t,x) \in [0,T] \times I , \label{eq:bsPDE} \\
h(x)
	&=	u^{BS}(0,x)	, 														\label{eq:bsBC} \\
0	
	&=	u^{BS}(t,l)	 													& & \text{if} \quad l > -\infty , \label{eq:bsBC,l}\\
0
	&=	u^{BS}(t,r)  													& & \text{if} \quad r <	\infty , \label{eq:bsBC,r}
\end{align}
where
\begin{align}
\L_X
	&=	\l(\mu-\frac{1}{2}\sig^2\r)\d_x + \frac{1}{2}\sig^2\d^2_{xx} . \label{eq:Lx}
\end{align}
By construction $u^{(0)}(t,x)$ satisfies BC's \eqref{eq:bsBC}, \eqref{eq:bsBC,l} and \eqref{eq:bsBC,r}.  Hence, by the uniqueness of the solution to the above linear PDE problem,
in order to establish the equivalence of $u^{(0)}(t,x)$ to $u^{BS}(t,x)$ we need to show that $u^{(0)}(t,x)$ satisfies PDE \eqref{eq:bsPDE}.  To this end we note that $\L_X = \< \L^{(0)} \>$.  Now, specializing to the discrete spectrum case, we see that
\begin{align}
\l( -\d_t + \L_X \r) u^{(0)}
	&=	\sum_n A_n^{(0)} \l( -\d_t \, g_n^{(0)}\r) \Psi_n^{(0)} + \sum_n A_n^{(0)} g_n^{(0)} \l(  \< \L^{(0)} \> \Psi_n^{(0)}\r) \\
	&=	\sum_n \l( \lam_n^{(0)} -  \lam_n^{(0)} \r) A_n^{(0)} g_n^{(0)} \Psi_n^{(0)} = 0 .
\end{align}
The calculation in continuous spectrum case is analogous. Hence, we deduce that $u^{BS}(t,x)=u^{(0)}(t,x)$.  \hfill
\end{proof}
Theorem \ref{thm:u0=uBS} is consistent with the findings of \cite{fouque}, where it was found that the $\O(\eps^0)$ price of an option was given exactly by $u^{BS}(t,x)$.
\begin{theorem}\label{thm:u1=uFPS}
Let $\sqrt{\eps}\,u^{FPS}(t,x)$ be the $\O\l( \eps^{1/2} \r)$ correction to the Black-Scholes price $u^{BS}(t,x)$ of an option with payoff \eqref{eq:payoff} as calculated in \cite{fouque}.  Then
\begin{align}
u^{FPS}(t,x)
	&=	u^{(1)}(t,x) .
\end{align}
\end{theorem}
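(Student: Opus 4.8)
The plan is to show that the function $u^{(1)}(t,x)$ constructed in Theorem~\ref{thm:main} solves exactly the same linear parabolic problem that defines the Fouque--Papanicolaou--Sircar correction, and then to conclude by uniqueness -- this is the direct analogue of the proof of Theorem~\ref{thm:u0=uBS}. Recall from \cite{fouque} that $\sqrt{\eps}\,u^{FPS}$ is characterised as the solution of the source equation $\L^{BS} u^{FPS} = \A^{(1)} u^{BS}$ on $[0,T]\times I$, subject to the homogeneous initial condition $u^{FPS}(0,x)=0$ and, at any finite endpoint, the homogeneous Dirichlet condition $u^{FPS}(t,l)=0$ and/or $u^{FPS}(t,r)=0$; since $\L^{BS}=-\d_t+\L_X$ is a non-degenerate parabolic operator on $I$, this problem has a unique solution. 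Because Theorem~\ref{thm:u0=uBS} already gives $u^{(0)}=u^{BS}$, it suffices to verify that $u^{(1)}$ from \eqref{eq:u1} satisfies this same source equation together with the stated initial and boundary conditions.

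The initial and boundary conditions are immediate. At a finite endpoint each $\Psi_n^{(0)}$ and each $\Psi_n^{(1)}$ vanishes by \eqref{eq:BC,Psi0Psi1,l}--\eqref{eq:BC,Psi0Psi1,r}, hence so does every term of \eqref{eq:u1}; and $u^{(1)}(0,x)=0$ holds by the very choice of $A_n^{(1)}$ in \eqref{eq:A1} (resp.\ $A_\nu^{(1)}$ in \eqref{eq:A1,cont}) made in the proof of Theorem~\ref{thm:main}. The only real work is the source equation. Specializing to the discrete spectrum, one applies $\L^{BS}=-\d_t+\<\L^{(0)}\>$ term by term to \eqref{eq:u1}, using $\d_t g_n^{(0)}=\lam_n^{(0)}g_n^{(0)}$, $\d_t g_n^{(1)}=\lam_n^{(1)}g_n^{(0)}+\lam_n^{(0)}g_n^{(1)}$, the eigenrelation $\<\L^{(0)}\>\Psi_n^{(0)}=\lam_n^{(0)}\Psi_n^{(0)}$ from \eqref{eq:center1}, and the first-order relation \eqref{eq:eigen1} rewritten as $\bigl(\<\L^{(0)}\>-\lam_n^{(0)}\bigr)\Psi_n^{(1)}=\lam_n^{(1)}\Psi_n^{(0)}-\A^{(1)}\Psi_n^{(0)}$. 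The $A_n^{(1)}g_n^{(0)}\Psi_n^{(0)}$ terms are annihilated (each $g_n^{(0)}\Psi_n^{(0)}$ is a Black--Scholes eigenmode, exactly as in the proof of Theorem~\ref{thm:u0=uBS}); the $\lam_n^{(1)}\Psi_n^{(0)}$ contributions arising from the $g_n^{(1)}$ terms and from the $\Psi_n^{(1)}$ terms cancel against one another; and what survives is $-\sum_n A_n^{(0)}g_n^{(0)}\A^{(1)}\Psi_n^{(0)}=-\A^{(1)}\sum_n A_n^{(0)}g_n^{(0)}\Psi_n^{(0)}=-\A^{(1)}u^{(0)}=-\A^{(1)}u^{BS}$, which is the FPS source term up to the sign convention fixed in \cite{fouque}. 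Pulling $\A^{(1)}$ through the sum is legitimate since $u^{(0)}=u^{BS}$ is smooth in $x$ and $V_2,V_3$ are constants. The continuous-spectrum case is identical with $\sum_n$ replaced by $\int d\nu$. Uniqueness of the parabolic problem then yields $u^{(1)}=u^{FPS}$.

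The main obstacle is the bookkeeping in the term-by-term computation: there are three families of terms, each contributing a $\d_t$ piece and a $\<\L^{(0)}\>$ piece, and one must track carefully that the eigenmode terms drop out and that the two $\lam_n^{(1)}\Psi_n^{(0)}$ contributions cancel, leaving precisely $\A^{(1)}u^{(0)}$ with the correct sign. A secondary, more technical point is justifying the term-by-term differentiation of the spectral series (and, in the continuous-spectrum case, differentiation under the integral sign), which follows from the growth/decay estimates underlying the completeness statements in Theorem~\ref{thm:eigen0} together with the regularity of $u^{BS}$.
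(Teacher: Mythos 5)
Your proposal is correct and follows essentially the same route as the paper: verify that $u^{(1)}$ satisfies the FPS source equation $\l(-\d_t+\L_X\r)u^{FPS}=-\A^{(1)}u^{BS}$ together with the homogeneous initial and boundary conditions, then invoke uniqueness of the linear parabolic problem. The term-by-term cancellations you describe (the $A_n^{(1)}$ eigenmode terms vanishing, the two $\lam_n^{(1)}\Psi_n^{(0)}$ contributions cancelling, leaving $-\A^{(1)}u^{(0)}$) are exactly the computation carried out in the paper's proof.
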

\begin{proof}
It is established in \cite{fouque} that the FPS correction $u^{FPS}(t,x)$ to the Black-Scholes price of an option $u^{BS}(t,x)$ satisfies the following PDE and BC's
\begin{align}
- \A^{(0)}u^{BS}
	&=	\l( -\d_t + \L_X \r) u^{FPS}, 					& & (t,x) \in [0,T] \times I , \label{eq:fpsPDE} \\
0
	&=	u^{FPS}(0,x)	, 																														\label{eq:fpsBC} \\
0	
	&=	u^{FPS}(t,l)	 													& & \text{if} \quad l > -\infty , \label{eq:fpsBC,l}\\
0
	&=	u^{FPS}(t,r)  													& & \text{if} \quad r <	\infty . \label{eq:fpsBC,r}
\end{align}
By construction $u^{(1)}(t,x)$ satisfies BC's \eqref{eq:fpsBC}, \eqref{eq:fpsBC,l} and \eqref{eq:fpsBC,r}.  Hence, by the uniqueness of the solution to the above linear PDE,
in order to establish the equivalence of $u^{(1)}(t,x)$ to $u^{FPS}(t,x)$ we need to show that $u^{(1)}(t,x)$ satisfies PDE \eqref{eq:fpsPDE}.  Using $u^{BS}(t,x)=u^{(0)}(t,x)$, $\L_X=\<\L^{(0)}\>$, expression \eqref{eq:u0} for $u^{(0)}(t,x)$, expression \eqref{eq:u1} for $u^{(1)}(t,x)$, a straightforward but tedious calculations yields (in the discrete spectrum case)
\begin{align}
\d_t u^{(1)}
	&=	\sum_n A_n^{(0)} g_n^{(0)} \lam_n^{(1)} \l(1 + t \, \lam_n^{(0)} \r) \Psi_n^{(0)},	\\
\<\L^{(0)}\> u^{(1)}
	&=	\sum_n A_n^{(0)} g_n^{(0)} \lam_n^{(0)} \l( t \, \lam_n^{(1)} \r) \Psi_n^{(0)},	\\
\A^{(1)} u^{(0)}
	&=	\sum_n A_n^{(0)} \lam_n^{(1)} g_n^{(0)} \Psi_n^{(0)}.
\end{align}
Inserting the above equations into \eqref{eq:fpsPDE} verifies that $u^{(1)}(t,x)$ satisfies PDE \eqref{eq:fpsPDE}.  The calculation in the continuous spectrum case is analogous. Hence, we deduce $u^{(1)}(t,x)=u^{FPS}(t,x)$. \hfill
\end{proof}
Conveniently, the equivalence relation
\begin{align}
u^{(0)}(t,x) +  \sqrt{\eps}\,u^{(1)}(t,x) &= u^{BS}(t,x) +  \sqrt{\eps}\,u^{FPS}(t,x) ,
\end{align}
establishes the accuracy of our pricing approximation.
\begin{theorem}\label{thm:accuracy}
Under assumptions \ref{item:first} - \ref{item:penultimate} of section \ref{sec:model} and under the assumption of bounded $f(y)$ we have the following accuracy results:
\begin{enumerate}
	\item For European options with smooth and bounded payoffs, for all $t<\infty$ and for $x, y \in \R$
	\begin{align}
	\left|u^\eps(t,x,y)-\left(u^{(0)}(t,x)+\sqrt{\eps}u^{(1)}(t,x) \right)\right|
		&=	\O (\eps)	.
	\end{align} \label{item:smooth}
	\item For European call options, for all $t<\infty$ and for $x, y \in \R$
	\begin{align}
	\left|u^\eps(t,x,y)-\l(u^{(0)}(t,x)+\sqrt{\eps}\,u^{(1)}(t,x) \r)\right|
		&=  \O\l(\eps \left|\log \eps \right|\r) .
	\end{align} \label{item:call}
\end{enumerate}
\end{theorem}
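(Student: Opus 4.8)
The plan is to reduce the statement to the accuracy result for the Fouque--Papanicolaou--Sircar (FPS) approximation already established in \cite{fouque}, using the equivalence proven in Theorems \ref{thm:u0=uBS} and \ref{thm:u1=uFPS}. Combining those two theorems gives
\begin{align}
u^{(0)}(t,x) + \sqrt{\eps}\,u^{(1)}(t,x) = u^{BS}(t,x) + \sqrt{\eps}\,u^{FPS}(t,x),
\end{align}
so the quantity to be estimated equals $\left|u^\eps(t,x,y) - \l( u^{BS}(t,x) + \sqrt{\eps}\,u^{FPS}(t,x) \r)\right|$. For European payoffs, $u^\eps$ solves \eqref{eq:uPDE} with only the terminal condition \eqref{eq:uBC} (no lateral boundary conditions), which is exactly the regime in which \cite{fouque} proves the $\O(\eps)$ bound for smooth bounded payoffs and the $\O(\eps|\log\eps|)$ bound for the call. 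Assumptions \ref{item:first}--\ref{item:penultimate} together with bounded $f$ are precisely what is needed to invoke those results: boundedness of $f$ keeps the $X$-dynamics and the operators $\L^{(-2)},\L^{(-1)},\L^{(0)}$ under control, boundedness of $\Lambda$ controls the market price of volatility risk, the spectral-gap assumption guarantees that the Poisson equations (such as \eqref{eq:PoissonPhi}) have solutions with controlled growth, and the invariant-distribution assumptions make the averaging $\<\cdot\>$ meaningful and independent of $\eps$.

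For completeness I would also sketch the direct argument rather than only cite it. First I would introduce two higher-order correctors $u^{(2)}(t,x,y)$ and $u^{(3)}(t,x,y)$, defined through the Poisson-equation machinery of section \ref{sec:eigen}, chosen so that $\widetilde u^\eps := u^{(0)} + \sqrt{\eps}\,u^{(1)} + \eps\,u^{(2)} + \eps^{3/2} u^{(3)}$ satisfies $\l( -\d_t + \L_{X,Y}^\eps \r) \widetilde u^\eps = \eps\, G^\eps(t,x,y)$ with a source $G^\eps$ built from finitely many $x$-derivatives of $u^{BS}$ and from $\phi,\Lambda,f$; for smooth bounded payoffs $G^\eps$ is bounded uniformly in $(t,x,y)$ on $[0,T]\times\R\times\R$. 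The residual $R^\eps := u^\eps - \widetilde u^\eps$ then solves the same PDE with source $-\eps\,G^\eps$ and terminal data $R^\eps(0,x,y) = -\eps\,u^{(2)}(0,x,y) - \eps^{3/2} u^{(3)}(0,x,y) = \O(\eps)$. A Feynman--Kac representation of $R^\eps$, followed by the elementary bound $\textup{(sup of terminal data)} + t \times \textup{(sup of source)}$, gives $|R^\eps(t,x,y)| = \O(\eps)$ for every finite $t$, which is part \ref{item:smooth}.

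For part \ref{item:call} the difficulty is that the call payoff $h(x) = (e^x - K)^+$ is merely Lipschitz, so the correctors $u^{(2)},u^{(3)}$ --- involving several $x$-derivatives of $u^{BS}$ --- blow up as $t\to 0$ near $x = \log K$, and the bounds on $G^\eps$ and on the terminal data of $R^\eps$ degenerate. The remedy I would follow is the standard payoff-regularization argument: replace $h$ by a smoothed payoff $h^\delta$ with $\| h - h^\delta \|_\infty = \O(\delta)$ and controlled derivatives, apply the smooth-payoff estimate to get an error $\O(\eps/\delta^{p})$ for the power $p$ dictated by the orders of differentiation appearing in the correctors, add the $\O(\delta)$ error from perturbing the payoff (propagated by the contraction property of the pricing semigroup), and optimize over $\delta$; accounting for the logarithmic blow-up of the Black--Scholes Greeks near maturity, the resulting balance of the two error contributions yields the stated $\O(\eps|\log\eps|)$ rate. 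I expect this regularization-and-optimization step --- tracking precisely how the singularity of the call's Greeks as $(t,x)\to(0,\log K)$ interacts with the powers of $\eps$ multiplying the correctors --- to be the only genuinely delicate point; the rest is the routine perturbation-theoretic and maximum-principle bookkeeping already carried out in \cite{fouque}.
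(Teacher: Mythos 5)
Your proposal is correct and follows essentially the same route as the paper: the paper's proof consists precisely of invoking the equivalence $u^{(0)}+\sqrt{\eps}\,u^{(1)}=u^{BS}+\sqrt{\eps}\,u^{FPS}$ from Theorems \ref{thm:u0=uBS} and \ref{thm:u1=uFPS} and then citing the accuracy results of \cite{fouque} and \cite{fouque2003proof} for the smooth-payoff and call cases respectively. Your additional sketch of the corrector/residual argument and the payoff-regularization step is an accurate summary of what those cited references actually do, but it is supplementary detail rather than a different method.
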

\begin{proof}
The proofs of \ref{item:smooth} and \ref{item:call} are given in \cite{fouque} and \cite{fouque2003proof} respectively. The proof for unbounded $f(y)$ satisfying condition \ref{item:last} of section \ref{sec:model} can be found in \cite{fpss}.  \hfill
\end{proof}
We remark that the accuracy results of Theorem \ref{thm:accuracy} are valid when $\eps$ is much smaller than the life of the option.  The reason for this is that our pricing approximation depends on the process $Y_t^\eps$ having sufficient time for the time-average of $f(Y_t^\eps)$ to approach its ensemble average
\begin{align}
\frac{1}{t}\int_0^t f^2\l(Y^\eps_s\r) \,ds
	&\stackrel{{\cal D}}{=}		\frac{1}{t}\int_0^t f^2\l(Y^{1}_{s/\eps}\r) \,ds
	=	\frac{1}{t/\eps}\int_0^{t/\eps} f^2\l(Y^{1}_{u}\r) \,du
	\stackrel{\eps \downarrow 0}{\longrightarrow} \<f^2\>	. \label{eq:convergence}
\end{align}
The accuracy results of Theorem \ref{thm:accuracy} are for fixed $t$.  It is clear from \eqref{eq:convergence} that convergence is \emph{not} uniform in $t$.  For barrier options, if $x$ is near an endpoint $l$ or $r$, the life of the option may be of order $\eps$ due to $X$ hitting a barrier prior to the time of maturity $T$.  Thus, convergence is not uniform in $x$.  
A detailed analysis of the accuracy of our pricing approximation when $x$ is near an endpoint would require boundary layer analysis.  Such an analysis is beyond the scope of this paper.

\section{Practical Implementation} \label{sec:practical}
In this section we discuss the practical implementation of our methods.  In sections \ref{sec:eu}, \ref{sec:uo} and \ref{sec:db} we provide three examples, which show how the results of Sections \ref{sec:eigen} and \ref{sec:prices} can be used to specify the price of an option.  In section \ref{sec:OtherOptions}  we sketch how our results can be extended to price rebate and knock-in options.  And, in section \ref{sec:calibration} we provide a recipe for calibrating the fast mean-reverting class of models to the market using European call option data.

\subsection{Example: European Call Option}\label{sec:eu}
The payoff of a European call option with strike price $K = e^k$ and time to maturity $t$ can be expressed in the framework of \eqref{eq:payoff} by choosing
\footnote{
Note that the payoff $h(x)$ is not in $L^2(I,s)$.  This can be dealt with by appealing to the theory of generalized Fourier transforms.  We discuss this further when we calculate $A_\nu^{(0)}$.
}
\begin{align}
h(x)
	&=	\l(e^x - e^k \r)^+ ,	&
I
	&= \l( -\infty , \infty \r) .
\end{align}
Note that as $X$ can not leave $I=(-\infty,\infty)$ in finite time we have $\tau=\infty$ and $\I_{\left\{\tau > \, t\right\}}=1$.  Hence, the payoff of the option is given simply by $h\l(X_t\r)$, which is as it should be for a European option.
\par
To calculate the approximate price of a European call option $u^{(0)}(t,x) + \sqrt{\eps} \, u^{(1)}(t,x)$ the first thing we must do is find expressions for the approximate eigenfunctions $\Psi_\nu^{(0)}(x) + \sqrt{\eps} \, \Psi_\nu^{(1)}(x)$ and eigenvalues $\lambda_\nu^{(0)} + \sqrt{\eps} \, \lambda_\nu^{(1)}$.  The $\O \l( \eps^0 \r)$ eigenfunctions $\Psi_\nu^{(0)}(x)$ and eigenvalues $\lambda_\nu^{(0)}$ are given explicitly by \eqref{eq:Psi0eu} and \eqref{eq:lambda0eu} of Proposition \ref{prop:Psi0}.  To find the $\O \l( \eps^{1/2} \r)$ corrections $\Psi_\nu^{(1)}(x)$ and $\lambda_\nu^{(1)}$ we use Theorem \ref{thm:Psi1}.  We note
\begin{align}
\l( \Psi_\om^{(0)},\A^{(1)} \, \Psi_\nu^{(0)} \r)_s 
	&=	C^{(1)}(\om,\nu) \, \I_{\left\{ \om \neq \nu \right\}} + D^{(1)}(\nu) \, \delta(\om-\nu) ,	\\
C^{(1)}(\om,\nu)
	&=	0 , &
	\beta_\nu
	&=	(i \nu - c)^3 - (i \nu - c)^2 , \\
D^{(1)}(\nu)
	&=	V_3 \beta_\nu + V_2 \zeta_\nu ,	&
\zeta_\nu
	&=	(i \nu - c)^2 - (i \nu - c) .
\end{align}
Hence, from \eqref{eq:Psi1,lambda1,cont} we find
\begin{align}
\Psi_\nu^{(1)}(x)
	&= 	0 , &
a_{\nu,\om}^{(1)}
	&=	0 , &
\lambda_\nu^{(1)}
	&=	V_3 \beta_\nu + V_2 \zeta_\nu .
\end{align}
We must now find expressions for $g_\nu^{(0)}(t)$, $g_\nu^{(1)}(t)$, $A_\nu^{(0)}$ and $A_\nu^{(1)}$.  This can be accomplished using Theorem \ref{thm:main}.  Having identified $\lambda_\nu^{(0)}$ and $\lambda_\nu^{(1)}$, we read $g_\nu^{(0)}(t)$ and $g_\nu^{(1)}(t)$ directly from \eqref{eq:g0g1}.  The coefficients $A_\nu^{(0)}$ and $A_\nu^{(1)}$ are obtained from \eqref{eq:A0} and \eqref{eq:A1,cont}.  We have
\begin{align}
A_\nu^{(0)}
	&= 	\int_{-\infty}^\infty e^{-c x} \sqrt{\frac{\sigma^2}{4 \pi}} \exp \l( i \nu x \r) \l( e^x - e^k \r)^+ \frac{2}{\sigma^2} e^{2 c x} dx
			\label{eq:A0integral,eu} \\
	&= 	\frac{1}{\sqrt{\sig^2 \pi}}\frac{e^{-k (i \nu - c - 1)}}{(i \nu - c - 1) (i \nu - c )}	, \label{eq:A0integral,eu2} \\
A_\nu^{(1)}
	&= 	0 .
\end{align}
Note that integral \eqref{eq:A0integral,eu} will not converge unless we impose $\text{Im}\left[ \nu \right] > (c+1)$.  Thus, in deriving result \eqref{eq:A0integral,eu2}, we have implicitly assumed $\nu=\nu_r+i \nu_i$ and fixed $\nu_i > (c+1)$.  
The process of extending the domain of the variable of integration into the complex plane, which is contained in the theory of of \emph{generalized Fourier transforms} \cite{titchmarsh1948introduction}, enables us to extend our results to European options with payoffs $h(x) \notin L^2(I,s)$.
It is important to note, however, that because of the condition $\nu_i > (c+1)$, when evaluating integrals \eqref{eq:u0,eu} and \eqref{eq:u1,eu} below we must make sure to set $\nu=\nu_r+i \nu_i$ and $d\nu=d\nu_r$ (i.e. integrate over a contour parallel to the real axis in the complex plane).
\par
Having found $A_\nu^{(0)}$ and $A_\nu^{(1)}$, the approximate option price $u^\eps(t,x,y) \approx u^{(0)}(t,x) + \sqrt{\eps} \, u^{(1)}(t,x)$ can be found from \eqref{eq:u0} and \eqref{eq:u1}.  We have
\begin{align}
u^{(0)}(t,x)
	&=	\int_{-\infty}^\infty A_\nu^{(0)} g_\nu^{(0)}(t) \Psi_\nu^{(0)}(x) d\nu , \label{eq:u0,eu}\\
u^{(1)}(t,x)
	&=	\int_{-\infty}^\infty A_\nu^{(0)} g_\nu^{(1)}(t) \Psi_\nu^{(0)}(x) d\nu . \label{eq:u1,eu}
\end{align}
\par
Now, recall from Theorem \ref{thm:u0=uBS} that $u^{(0)}(t,x)=u^{BS}(t,x)$, the Black-Scholes price of a European option with volatility $\sqrt{\sig^2}$.  And recall from Theorem \ref{thm:u1=uFPS} that $u^{(1)}(t,x)=u^{FPS}(t,x)$, the correction to the Black-Scholes price due to fast mean-reversion of the volatility, as calculated in \cite{fouque}.  Finally, recall from Corollary \ref{cor:Vs} and that $u^{(1)}(t,x)$ is linear in the group parameters parameters $V_2^\eps$ and $V_3^\eps$, defined in \eqref{eq:Veps}.  For European call options, it was shown in \cite{fouque} that the group parameters parameters $V_2^\eps$ and $V_3^\eps$ have a very specific affect on the implied volatility surface induced by fast mean-reverting stochastic volatility models; a change in $V_2^\eps$ corresponds to an adjustment of the overall level of implied volatility and a change in $V_3^\eps$ corresponds to an adjustment of the at-the-money skew.  This structure leads to a remarkably simple calibration procedure, which we outline in section \ref{sec:calibration}.  The effect of $V_2^{\eps}$ and $V_3^{\eps}$ on European call prices and the corresponding effect on the implied volatility surface is demonstrated in figures \ref{fig:price,eu} and \ref{fig:impvol,eu} respectively.
\begin{figure}[!ht]
	\centering
  	\subfigure[][]{\includegraphics[scale=0.62]{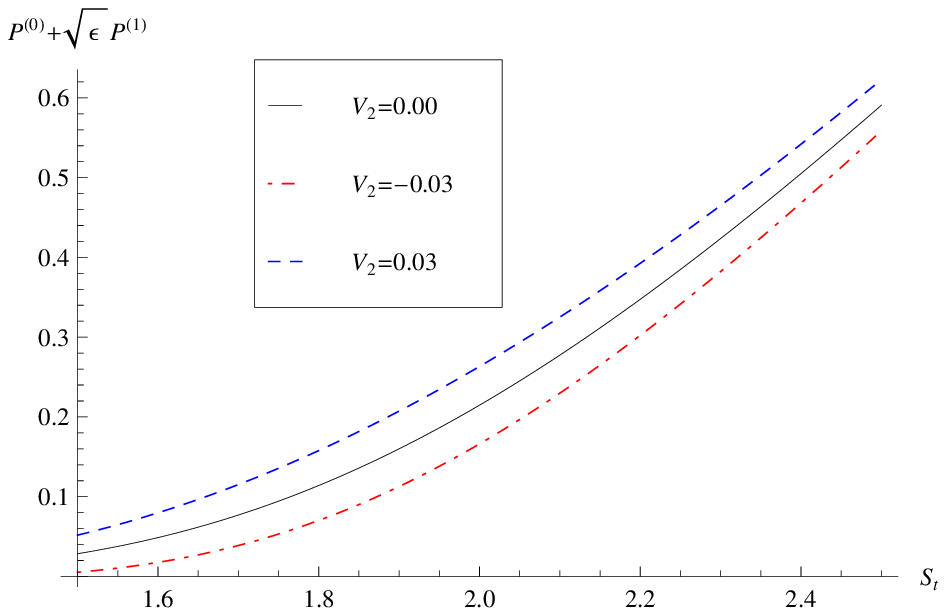} \label{subfig:V2,euro}}
    \subfigure[][]{\includegraphics[scale=0.62]{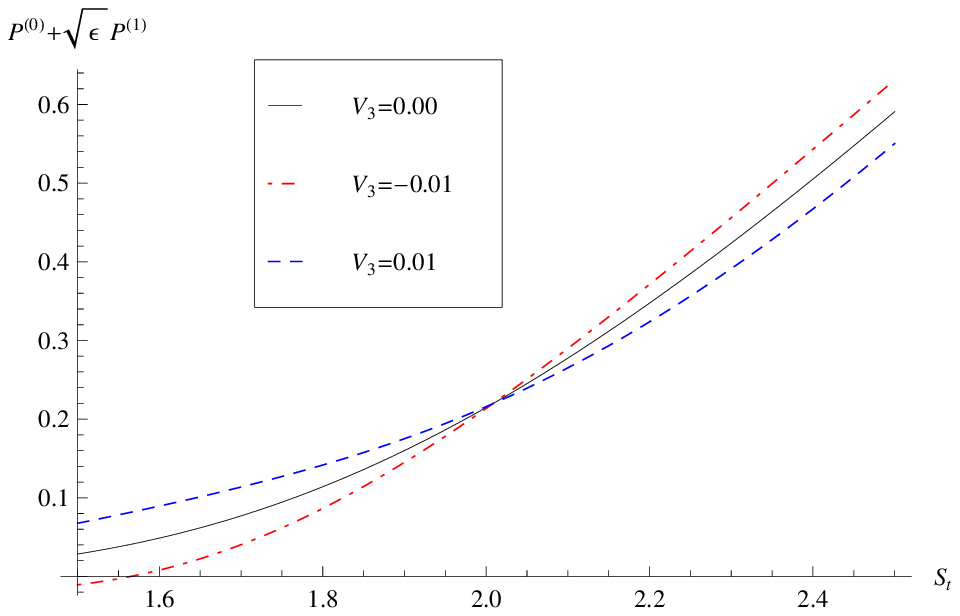} \label{subfig:V3,euro}}
    \caption[]{Prices of European call options are plotted as a function of $S_t$, the current price of the underlying.  In these sub-figures, $t=1/2$, $\mu=0.05$, $\sqrt{\sig^2}=0.34$ and $\exp(k)=2$.  In sub-figure \subref{subfig:V2,euro}, we set $V_3^\eps=0$, and vary $V_2^\eps$ from $-0.03$ (red, dot-dashed) to $0.03$ (blue, dashed).  In sub-figure \subref{subfig:V3,euro}, we set $V_2^\eps=0$, and vary $V_3^\eps$ from $-0.01$ (red, dot-dashed) to $0.01$ (blue, dashed).  In both sub-figures the solid line corresponds to the Black-Scholes price of the option (i.e. $V_2^\eps=V_3^\eps=0$).}
    \label{fig:price,eu}
\end{figure}
\begin{figure}[!ht]
	\centering
  	\subfigure[][]{\includegraphics[scale=0.62]{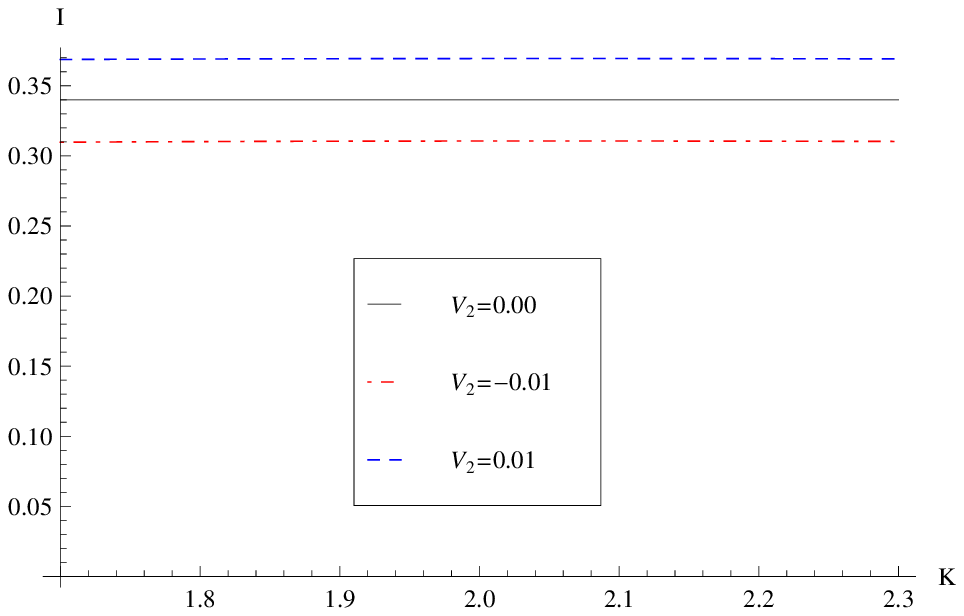} \label{subfig:V2,euro,impvol}}
    \subfigure[][]{\includegraphics[scale=0.62]{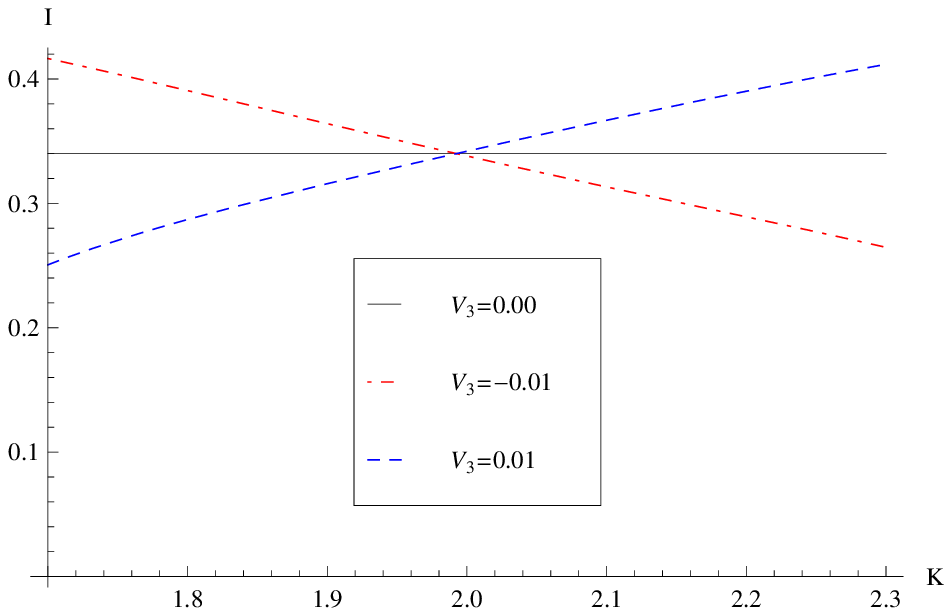} \label{subfig:V3,euro,impvol}}
    \caption[]{Implied volatilities of European call options are plotted as a function of strike price $K$.  In these sub-figures, $t=1/2$, $\mu=0.05$, $\sqrt{\sig^2}=0.34$ and $S_t=2$.  In sub-figure \subref{subfig:V2,euro,impvol}, we set $V_3^\eps=0$, and vary $V_2^\eps$ from $-0.01$ (red, dot-dashed) to $0.01$ (blue, dashed).  In sub-figure \subref{subfig:V3,euro,impvol}, we set $V_2^\eps=0$, and vary $V_3^\eps$ from $-0.01$ (red, dot-dashed) to $0.01$ (blue, dashed).  In both sub-figures the solid line corresponds to $I = \sig$ (i.e. $V_2^\eps=V_3^\eps=0$).}
    \label{fig:impvol,eu}
\end{figure}

\subsection{Example: Up-and-Out Call Option}\label{sec:uo}
The payoff of an up-and-out call option with knock-out barrier $R = e^r < \infty$, strike price $K = e^k < e^r$ and time to maturity $t$ can be expressed in the framework of \eqref{eq:payoff} by choosing
\begin{align}
h(x)
	&=	\l(e^x - e^k \r)^+ \I_{\left\{x \in I \right\}},	&
I
	&=	\l( -\infty, r \r) .
\end{align}
To calculate the approximate price $u^{(0)}(t,x) + \sqrt{\eps} \, u^{(1)}(t,x)$ of such an option we must first find expressions for the approximate eigenfunctions $\Psi_\nu^{(0)}(x) + \sqrt{\eps} \, \Psi_\nu^{(1)}(x)$ and eigenvalues $\lambda_\nu^{(0)} + \sqrt{\eps} \, \lambda_\nu^{(1)}$.
The $\O \l( \eps^0 \r)$ eigenfunctions $\Psi_\nu^{(0)}(x)$ and eigenvalues $\lambda_\nu^{(0)}$ are given explicitly by \eqref{eq:Psi0uo} and \eqref{eq:lambda0uo} of Proposition \ref{prop:Psi0}.  The $\O \l( \eps^{1/2} \r)$ corrections $\Psi_\nu^{(1)}(x)$ and $\lambda_\nu^{(1)}$ are found using Theorem \ref{thm:Psi1}.  We calculate
\begin{align}
\l( \Psi_\om^{(0)},\A^{(1)} \, \Psi_\nu^{(0)} \r)_s 
	&=	C^{(1)}(\om,\nu) \, \I_{\left\{ \om \neq \nu \right\}}+ D^{(1)}(\nu) \, \delta( \om - \nu ) , \\
C^{(1)}(\om,\nu)
	&=	\frac{2 \omega \nu}{\pi (\omega^2 - \nu^2)}\l( V_2 \chi + V_3 \eta_\nu \r) , \\
D^{(1)}(\om,\nu)
	&= 	\l(V_2 \xi_\nu + V_3 \gamma_\nu \r) , \\
\chi
	&=	2c + 1	,	&
\eta_\nu
	&=	\nu^2 - \l( 3c^2 + 2c \r)	,	\\
\xi_\nu
	&=	-\nu^2 + \l(c^2 + c\r)	,	&
\gamma_\nu
	&=	(3c+1)\nu^2 - \l(c^3 + c^2\r)	.
\end{align}
Now, from \eqref{eq:Psi1,lambda1,cont} we have
\begin{align}
\Psi_\nu^{(1)}(x)
	&= 	\int_0^\infty  a_{\nu,\om}^{(1)} \Psi_\om^{(0)}(x) d\om , &
a_{\nu,\om}^{(1)}
	&=	\frac{2}{\sig^2} \frac{2 \om \nu}{\pi \l( \om^2 - \nu^2 \r)^2} \l( V_2 \chi + V_3 \eta_\nu \r), \label{eq:Psi1,uo} \\
\lambda_\nu^{(1)}
	&=	V_2 \xi_\nu + V_3 \gamma_\nu . \label{eq:lambda1,uo}
\end{align}
To find $g_\nu^{(0)}(t)$, $g_\nu^{(1)}(t)$, $A_\nu^{(0)}$ and $A_\nu^{(1)}$ we use Theorem \ref{thm:main}.  Having identified $\lambda_\nu^{(0)}$ and $\lambda_\nu^{(1)}$, the functions $g_\nu^{(0)}(t)$ and $g_\nu^{(1)}(t)$ are read directly from \eqref{eq:g0g1}.  The coefficients $A_\nu^{(0)}$ and $A_\nu^{(1)}$ are obtained from \eqref{eq:A0} and \eqref{eq:A1,cont} respectively.  We have
\begin{align}
A_\nu^{(0)}
	&= 				\sqrt{\frac{4}{\sigma ^2\pi }}\nu \, e^{c r} \l(\frac{e^k}{c^2+\nu ^2}-\frac{e^r}{(1+c)^2+\nu ^2}\r)\\
	&\qquad		-\sqrt{\frac{4}{\sigma ^2\pi }} \l(\frac{\nu \, e^{k+c k} \l( \chi \cos\l(\nu\,(r-k) \r)+ \xi_\nu \sin\l(\nu\,(r-k) \r)\r)}
						{\l(c^2+\nu ^2\r) \l((1+c)^2+\nu ^2\r)}\r) ,\\
A_\nu^{(1)}
	&=				- \int_0^\infty A_\om^{(0)} \l( \Psi_\nu^{(0)}, \Psi_\om^{(1)} \r)_s d\om \\
	&=				- \int_0^\infty A_\om^{(0)} a_{\om,\nu}^{(1)} \, d\om
\end{align}
Finally, the approximate option price $u^\eps(t,x,y) \approx u^{(0)}(t,x) + \sqrt{\eps} \, u^{(1)}(t,x)$ can be found from \eqref{eq:u0} and \eqref{eq:u1}.
\begin{align}
u^{(0)}(t,x)
	&=					\int_0^\infty A_\nu^{(0)} g_\nu^{(0)}(t) \Psi_\nu^{(0)}(x) d\nu , \\
u^{(1)}(t,x)
	&=					\int_0^\infty	A_\nu^{(0)} g_\nu^{(1)}(t) \Psi_\nu^{(0)}(x) d\nu \\
	&\qquad			+	\int_0^\infty \int_0^\infty
							g_\nu^{(0)}(t) \l( A_\nu^{(0)} a_{\nu,\om}^{(1)} \Psi_\om^{(0)}(x) - A_\om^{(0)} a_{\om,\nu}^{(1)} \Psi_\nu^{(0)}(x) \r)
							d\om \, d\nu	. \label{eq:u1,uo}
\end{align}
Note that, while the double integral in \eqref{eq:u1,uo} is finite, it blows up along the line $\om=\nu$.  This complicates numerical integration schemes.  A method of dealing with this issue is provided in appendix \ref{sec:DoubleIntegral}.  Figure \ref{fig:price,uo} demonstrates the effect of parameters $V_2^\eps$ and $V_3^\eps$ on the price of an up-and-out call option.
\begin{figure}[!ht]
	\centering
  	\subfigure[][]{\includegraphics[scale=0.62]{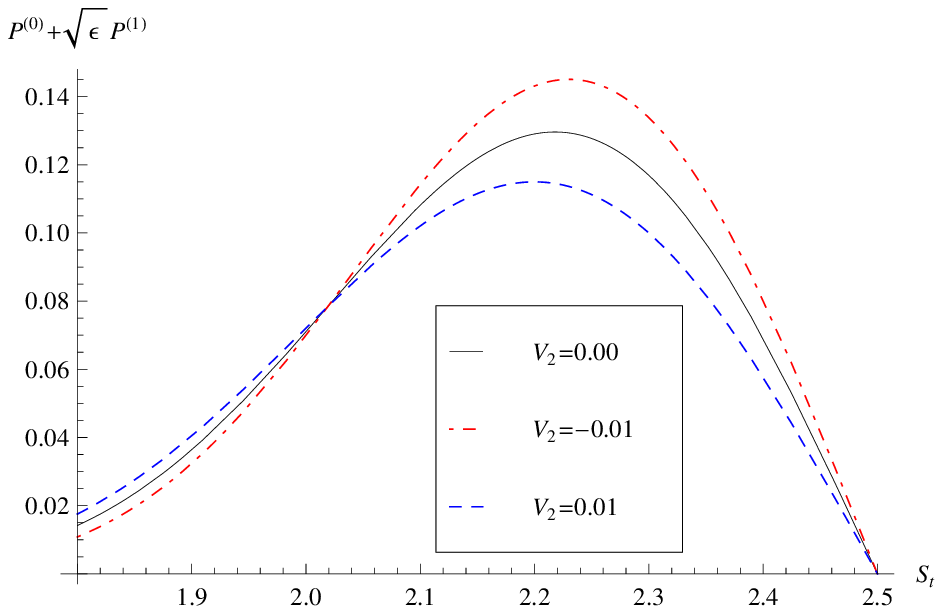} \label{subfig:V2,uo}}
    \subfigure[][]{\includegraphics[scale=0.62]{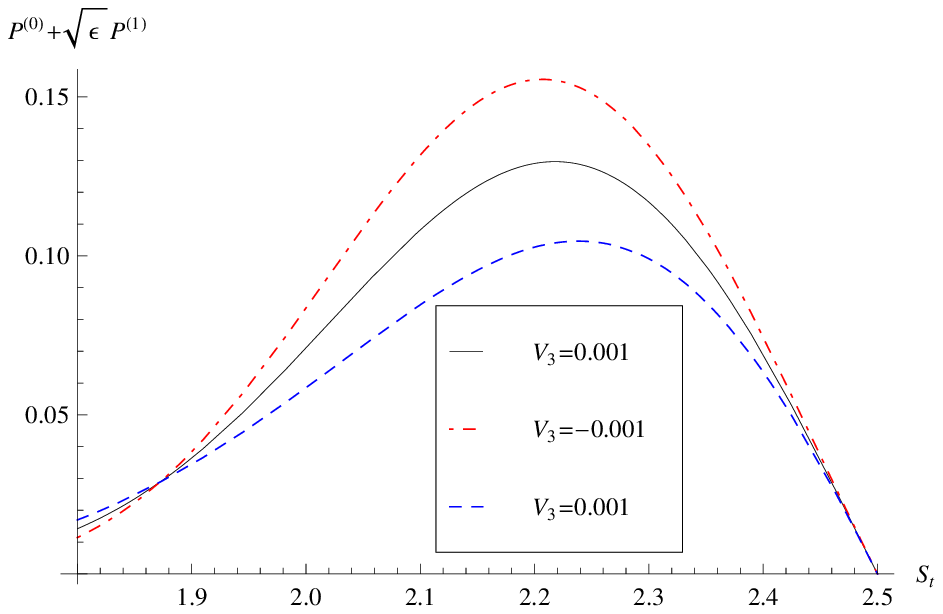} \label{subfig:V3,uo}}
    \caption[]{Prices of up-and-out call options are plotted as a function of $S_t$, the current price of the underlying.  In these sub-figures, $t=1/12$, $\mu=0.05$, $\sqrt{\sig^2}=0.34$, $\exp(k)=2$ and $\exp(r) = 2.5$.  In sub-figure \subref{subfig:V2,uo}, we set $V_3^\eps=0$, and vary $V_2^\eps$ from $-0.01$ (red, dot-dashed) to $0.01$ (blue, dashed).  In sub-figure \subref{subfig:V3,uo}, we set $V_2^\eps=0$, and vary $V_3^\eps$ from $-0.001$ (red, dot-dashed) to $0.001$ (blue, dashed).  In both sub-figures the solid line corresponds to the Black-Scholes price of the option (i.e. $V_2^\eps=V_3^\eps=0$).}
    \label{fig:price,uo}
\end{figure}

\subsection{Example: Double-Barrier Knock-Out Call Option}\label{sec:db}
The payoff of a double-barrier knock-out call option with barriers $L=e^l$ and $R=e^r$, strike price $K = e^k$ (with $-\infty < l < k < r < \infty$) and time to maturity $t$ can be expressed in the framework of \eqref{eq:payoff} by choosing
\begin{align}
h(x)
	&=	\l(e^x - e^k \r)^+ \I_{\left\{x \in I \right\}}, 	&
I
	&=	\l( l , r \r) .
\end{align}
To calculate the approximate price $u^{(0)}(t,x) + \sqrt{\eps} \, u^{(1)}(t,x)$ of such an option we must first find expressions for the approximate eigenfunctions $\Psi_\nu^{(0)}(x) + \sqrt{\eps} \, \Psi_\nu^{(1)}(x)$ and eigenvalues $\lambda_\nu^{(0)} + \sqrt{\eps} \, \lambda_\nu^{(1)}$.
The $\O \l( \eps^0 \r)$ eigenfunctions $\Psi_n^{(0)}(x)$ and eigenvalues $\lambda_n^{(0)}$ are given explicitly by \eqref{eq:Psi0db} and \eqref{eq:lambda0db}.  The $\O \l( \eps^{1/2} \r)$ corrections $\Psi_n^{(1)}(x)$ and $\lambda_n^{(1)}$ are found using Theorem \ref{thm:Psi1}.  We calculate
\begin{align}
\l( \Psi_m^{(0)},\A^{(1)} \, \Psi_n^{(0)} \r)_s
	&=	C^{(1)}(m,n) \, \I_{\left\{ m \neq n \right\}}+ D^{(1)}(n) \, \delta_{m,n} , \\
C^{(1)}(m,n)
	&=	\l(\l(-1\r)^{m+n}-1\r) \frac{2 \alpha_m \alpha_n}{\l(r-l\r) \l( \alpha_m^2 - \alpha_n^2 \r)}\l( V_2 \xi_n + V_3 \eta_n \r) , \\
D^{(1)}(n)
	&=	\l( V_2 \xi_n + V_3 \gamma_n \r) , \\
\chi
	&=	2c + 1	,	\\
\eta_n
	&=	\alpha_n^2 - \l( 3c^2 + 2c \r)	,	\\
\xi_n
	&=	-\alpha_n^2 + \l(c^2 + c\r)	,	\\
\gamma_n
	&=	(3c+1)\alpha_n^2 - \l(c^3 + c^2\r)	.
\end{align}
Now, from \eqref{eq:Psi1,lambda1} we find
\begin{align}
\Psi_\nu^{(1)}(x)
	&= 	\sum_{n \neq m}  a_{n,m}^{(1)} \Psi_m^{(0)}(x), &
a_{n,m}^{(1)}
	&=	\frac{2}{\sig^2} \l(\l(-1\r)^{m+n}-1\r)
			\frac{2 \alpha_m \alpha_n}{\l(r-l\r) \l( \alpha_m^2 - \alpha_n^2 \r)^2}\l( V_2 \xi + V_3 \eta_n \r), \\
\lambda_n^{(1)}
	&=	V_2 \xi_n + V_3 \gamma_n .
\end{align}
In order to find $g_n^{(0)}(t)$, $g_n^{(1)}(t)$, $A_n^{(0)}$ and $A_n^{(1)}$ we use Theorem \ref{thm:main}.  Having identified $\lambda_n^{(0)}$ and $\lambda_n^{(1)}$, $g_n^{(0)}(t)$ and $g_n^{(1)}(t)$ are read directly from \eqref{eq:g0g1}.  The coefficients $A_n^{(0)}$ and $A_n^{(1)}$ are obtained from \eqref{eq:A0} and \eqref{eq:A1}.  We have
\begin{align}
A_n^{(0)}
	&=		\sqrt{\frac{4}{\sigma^2(r-l)}}
				(-1)^n e^{c r} \alpha_n  \l(\frac{e^k}{c^2+\alpha_n ^2}-\frac{e^r}{(1+c)^2+\alpha_n ^2}\r) \\
&\qquad	- \sqrt{\frac{4 }{\sigma^2 (r-l)}} \l(
				\frac{e^{k+c k}(\chi  \alpha_n  \cos\l( \alpha_n (k-l)\r)-\xi_n \sin\l( \alpha_n (k-l)\r))}{\l(c^2+\alpha_n ^2\r) \l((1+c)^2+\alpha_n ^2\r)}
				\r) ,	\\
A_n^{(1)}
	&=	- \sum_m A_m^{(0)} \l( \Psi_n^{(0)}, \Psi_m^{(1)} \r)_s \\
	&=	- \sum_m A_m^{(0)} a_{m,n}^{(1)}	.
\end{align}
Finally, the approximate option price $u^\eps(t,x,y) \approx u^{(0)}(t,x) + \sqrt{\eps} \, u^{(1)}(t,x)$ can be found from \eqref{eq:u0} and \eqref{eq:u1}.
\begin{align}
u^{(0)}(t,x)
	&=					\sum_n A_n^{(0)} g_n^{(0)}(t) \Psi_n^{(0)}(x) , \\
u^{(1)}(t,x)
	&=					\sum_n	A_n^{(0)} g_n^{(1)}(t) \Psi_n^{(0)}(x) \\
	&\qquad			\sum_n \sum_m
							g_n^{(0)}(t) \l( A_n^{(0)} a_{n,m}^{(1)} \Psi_m^{(0)}(x) - A_m^{(0)} a_{m,n}^{(1)} \Psi_n^{(0)}(x) \r) .
\end{align}
Figure \ref{fig:price,db} demonstrates the effect of the parameters $V_2^\eps$ and $V_3^\eps$ on the price of a double-barrier call option.
\begin{figure}[!ht]
	\centering
  	\subfigure[][]{\includegraphics[scale=0.62]{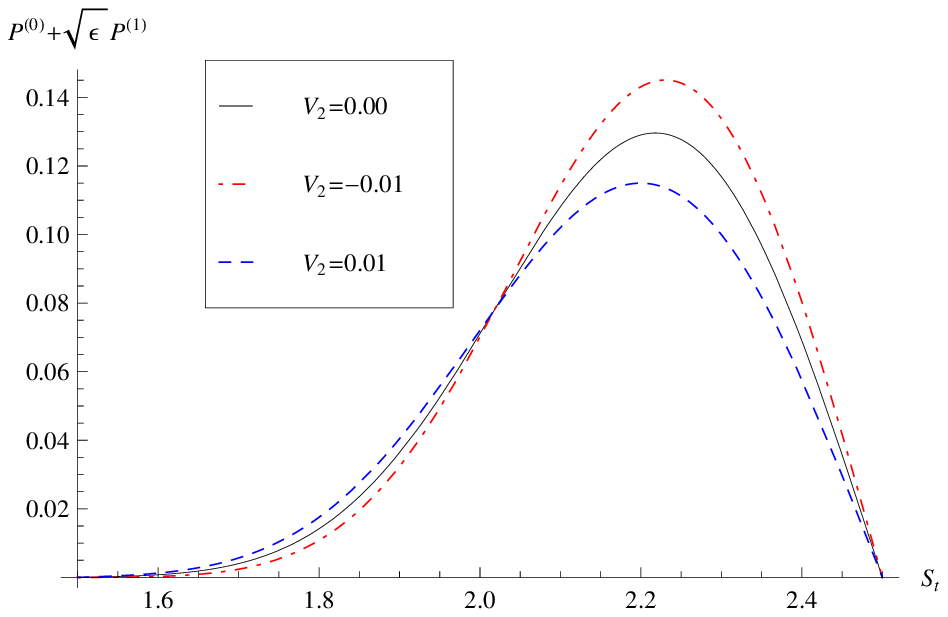} \label{subfig:V2,db}}
    \subfigure[][]{\includegraphics[scale=0.62]{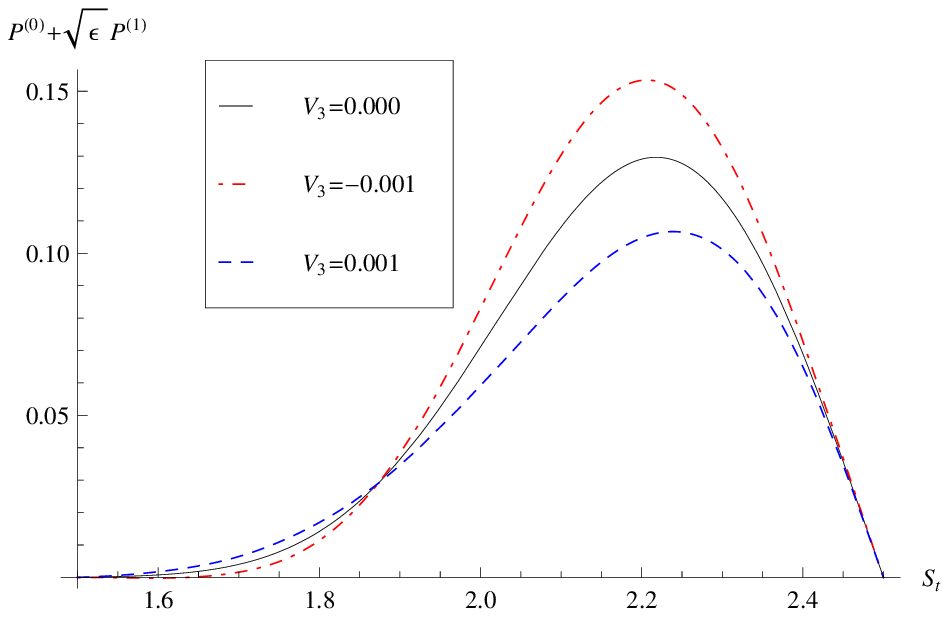} \label{subfig:V3,db}}
    \caption[]{Prices of double-barrier knock-out call options are plotted as a function of $S_t$, the current price of the underlying.  In these sub-figures, $t=1/12$, $\mu=0.05$, $\sqrt{\sig^2}=0.34$, $\exp(k)=2$, $\exp(l)=1.5$ and $\exp(r)=2.5$.  In sub-figure \subref{subfig:V2,db}, we set $V_3^\eps=0$, and vary $V_2^\eps$ from $-0.01$ (red, dot-dashed) to $0.01$ (blue, dashed).  In sub-figure \subref{subfig:V3,db}, we set $V_2^\eps=0$, and vary $V_3^\eps$ from $-0.001$ (red, dot-dashed) to $0.001$ (blue, dashed).  In both sub-figures the solid line corresponds to the Black-Scholes price of the option (i.e. $V_2^\eps=V_3^\eps=0$).}
\label{fig:price,db}
\end{figure}

\subsection{Brief Note on Knock-in and Rebate Options}\label{sec:OtherOptions}
To this point, we have considered only options with payoffs given by \eqref{eq:payoff}.  Options that fit within this framework include European options and knock-out style options.  In fact, our pricing results can be extended to include knock-in and rebate style options as well.  The focus of this section is to give an idea of how this extension can be done.  For the sake of brevity, the proofs in this section will be kept short and will contain only the main ideas needed for the full proofs.
\par
First, we consider a knock-in option.  Such an option has a payoff of the form 
\begin{align}
\text{Payoff}_{knock-in}
	&=		h(X_T)\, \I_{\left\{\tau < T \right\}} ,	&
\tau
	&= 		\inf\{t \geq 0 : X_t \notin I \} , \\
I
	&:=		(l,r) , &
-\infty
	&\leq	l < r \leq \infty , \\
h
	&:		\R \rightarrow \R_{+} . 
\end{align}
We compare this to a knock-out option and a European option (both of which we have already priced) whose payoffs can be reformulated as follows
\begin{align}
\text{Payoff}_{knock-out}
	&=		h(X_T)\, \I_{\left\{\tau \geq T \right\}}	&
\text{Payoff}_{European}
	&=	h(X_T) . \label{eq:KnockoutEuro}
\end{align}
We note that the knock-out option has payoff $h(X_T)$ on the event $\left\{ \tau \geq T \right\}$ whereas the knock-in option has payoff $h(X_T)$ on the event $\left\{ \tau < T \right\}$.  The European option has payoff $h(X_T)$ regardless of when $\tau$ occurs.  We can use this information to specify the price of a knock-in option.  Briefly,
\begin{align}
\text{Payoff}_{knock-out} + \text{Payoff}_{knock-in} 
	= h(X_T)\, \I_{\left\{\tau \geq \, T\right\}} + h(X_T)\, \I_{\left\{\tau < \, T\right\}}
	= h(X_T)
	= \text{Payoff}_{European} . 
\end{align}
Taking expectations on both sides, it follows that the price of a knock-in option is just the price of a European option minus the price of a knock-out option.  A more detailed discussion of the knock-in knock-out parity relation can be found in \cite{bouzoubaa2010exotic}.
\par
Now, consider a double-barrier rebate option (the the single-barrier case is analogous).  The payoff of a double-barrier rebate option is given by
\begin{align}
\text{Payoff}_{rebate}
	&=	h\l(X_\tau\r)	,&
\tau
	&= 		\inf\{t \geq 0 : X_t \notin I \} \wedge T, \\
I
	&:=		(l,r) , &
-\infty
	&<	l < r < \infty , \\
h
	&:		I \cup \left\{ l \right\} \cup \left\{ r \right \} \rightarrow \R_{+} , &
h(l) 
	&=	R_l, \quad h(r) = R_r	.
\end{align}
Note that the restriction $h(l) = h(r) = 0$ of equation \eqref{eq:payoff} has been relaxed.  The payoff of the above option can be described as follows: if the $\log$ of the underlying does not exit $(l,r)$ prior to time $T$, the option has payoff $h(X_T)$, otherwise the option pays a rebate $R_l \geq 0$ if the $\log$ of the underlying exits at $l$ or pays a rebate $R_r \geq 0$ if the $\log$ of the underlying exits at $r$. 
\par
The price $P^\eps_s$ of such an option at time $s \leq T$ is given by \eqref{eq:Pdecomp}.  However, the first term in \eqref{eq:Pdecomp} is no longer zero, due to a rebate being paid at time $\tau$ on the set $\left\{ \tau < s \right\}$.  Additionally, the function $P^\eps(s,x,y)$ now satisfies the following PDE and BC's (see Chapter $9$ of \cite{oksendal2003stochastic})
\begin{align}
0
	&=	\l( \d_s - \mu + \L_{X,Y}^\eps \r) P_{rebate}^\eps , 	& & (s,x,y) \in [0,T] \times I \times \R , \\
h(x)
	&=	P_{rebate}^\eps(T,x,y)	,  \\
R_l	
	&=	P_{rebate}^\eps(t,l,y)	,	\\
R_r
	&=	P_{rebate}^\eps(t,r,y) ,  
\end{align}
where we have added the subscript $rebate$ to indicate that we are specifically considering rebate options.  In terms of $u_{rebate}^\eps(t,x,y)$, whose relation to $P_{rebate}^\eps(s,x,y)$ is defined in \eqref{eq:uAndP}, we have
\begin{align}
0
	&=	\l( -\d_t + \L_{X,Y}^\eps \r) u_{rebate}^\eps , 	& & (t,x,y) \in [0,T] \times I \times \R , \label{eq:uPDE2} \\
h(x)
	&=	u_{rebate}^\eps(0,x,y)	,  \label{eq:uBC2}  \\
e^{\mu t}R_l	
	&=	u_{rebate}^\eps(t,l,y)	,	\label{eq:uBCl2} \\
e^{\mu t}R_r
	&=	u_{rebate}^\eps(t,r,y) . \label{eq:uBCr2}
\end{align}
In order to specify the approximate price $\l(u^{(0)} + \sqrt{\eps} \, u^{(1)}\r)_{rebate}$ of a rebate option, we shall need the following Lemma.\\
\begin{lemma} \label{lem:rebate}
The price $u_{rebate}^\eps(t,x,y)$ of a rebate option can be expressed as
\begin{align}
u_{rebate}^\eps(t,x,y)
	&= e^{\mu t} \Phi^\eps(x,y) + v^\eps(t,x,y), \label{eq:u=Phi+v}
\end{align}
where $\Phi^\eps(x,y)$ satisfies
\begin{align}
0
	&=	\l( -\mu + \L_{X,Y}^\eps \r) \Phi^\eps , 	& & (x,y) \in I \times \R , \\
R_l	
	&=	\Phi^\eps(l,y)	,	\\
R_r
	&=	\Phi^\eps(r,y) ,
\end{align}
and $v^\eps(t,x,y)$ satisfies
\begin{align}
0
	&=	\l( -\d_t + \L_{X,Y}^\eps \r) v^\eps , 	& & (t,x,y) \in [0,\infty ) \times I \times \R , \label{eq:vPDE} \\
h(x) - \Phi^\eps(x,y)
	&=	v^\eps(0,x,y)	,  \label{eq:vBC}\\
0	
	&=	v^\eps(t,l,y)	,	\label{eq:vBC,l} \\
0
	&=	v^\eps(t,r,y) . \label{eq:vBC,r}
\end{align}
\end{lemma}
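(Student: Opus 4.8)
The plan is to homogenize the inhomogeneous boundary conditions \eqref{eq:uBCl2}--\eqref{eq:uBCr2} by subtracting off a suitably chosen particular solution, which is exactly the role of $e^{\mu t}\Phi^\eps(x,y)$. First I would take $\Phi^\eps(x,y)$ to be the solution of the stationary boundary value problem stated in the Lemma. Its existence and uniqueness (with at most polynomially growing dependence on $y$) comes from the Feynman--Kac representation
\begin{align}
\Phi^\eps(x,y)
	&= \E\left[ e^{-\mu \tau_I}\l( R_l \, \I_{\left\{ X_{\tau_I} = l \right\}} + R_r \, \I_{\left\{ X_{\tau_I} = r \right\}} \r) \, \Big| \, X_0 = x , \Y_0 = y \right] , &
\tau_I
	&= \inf\{ t \geq 0 : X_t \notin I \} ,
\end{align}
where $\tau_I < \infty$ $\P$-a.s. because $l$ and $r$ are finite and the diffusion is non-degenerate; equivalently one may simply invoke Chapter $9$ of \cite{oksendal2003stochastic} for the corresponding elliptic problem with Dirichlet data $R_l, R_r$.

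Next I would define $v^\eps(t,x,y) := u_{rebate}^\eps(t,x,y) - e^{\mu t}\Phi^\eps(x,y)$, so that \eqref{eq:u=Phi+v} holds by construction, and verify \eqref{eq:vPDE}--\eqref{eq:vBC,r} by direct substitution. Using \eqref{eq:uPDE2} together with $\d_t\l( e^{\mu t}\Phi^\eps \r) = \mu e^{\mu t}\Phi^\eps$ and $\L_{X,Y}^\eps\l( e^{\mu t}\Phi^\eps \r) = e^{\mu t}\L_{X,Y}^\eps\Phi^\eps$ gives
\begin{align}
\l( -\d_t + \L_{X,Y}^\eps \r) v^\eps
	&= \l( -\d_t + \L_{X,Y}^\eps \r) u_{rebate}^\eps - \l( -\d_t + \L_{X,Y}^\eps \r)\l( e^{\mu t}\Phi^\eps \r) \\
	&= 0 - e^{\mu t}\l( -\mu + \L_{X,Y}^\eps \r)\Phi^\eps = 0 ,
\end{align}
which is \eqref{eq:vPDE}. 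The boundary conditions \eqref{eq:vBC,l}, \eqref{eq:vBC,r} follow from \eqref{eq:uBCl2}--\eqref{eq:uBCr2} and $\Phi^\eps(l,y) = R_l$, $\Phi^\eps(r,y) = R_r$, since $v^\eps(t,l,y) = e^{\mu t}R_l - e^{\mu t}R_l = 0$ and similarly at $r$. Finally, evaluating at $t = 0$ and using \eqref{eq:uBC2} yields $v^\eps(0,x,y) = h(x) - \Phi^\eps(x,y)$, which is \eqref{eq:vBC}.

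The only genuinely delicate point is the well-posedness of the stationary problem for $\Phi^\eps$: that it admits a unique solution, smooth on $I \times \R$, with growth in $y$ controlled enough both for the subtraction argument above and for the subsequent spectral expansion of $v^\eps$ to be meaningful. Under the standing assumptions of section \ref{sec:model} and since $l, r$ are finite this is routine, and in keeping with the stated intent of this section I would only sketch it (or defer to \cite{oksendal2003stochastic}) rather than carry out the estimates in detail. Uniqueness of the decomposition \eqref{eq:u=Phi+v} then follows from the uniqueness of solutions to the linear PDE problem \eqref{eq:uPDE2}--\eqref{eq:uBCr2}, since $\Phi^\eps$ is pinned down by its own boundary value problem.
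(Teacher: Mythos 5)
Your proposal is correct and follows the same route as the paper: the paper's proof is exactly the direct substitution of the decomposition \eqref{eq:u=Phi+v} into \eqref{eq:uPDE2}--\eqref{eq:uBCr2}, which you carry out explicitly. Your added remarks on well-posedness of the stationary problem for $\Phi^\eps$ go beyond what the paper records but are consistent with its stated intent to keep the proofs in this section brief.
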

\begin{proof}
The proof is by substituting \eqref{eq:u=Phi+v} into \eqref{eq:uPDE2}, \eqref{eq:uBC2}, \eqref{eq:uBCl2} and \eqref{eq:uBCr2}.  \hfill
\end{proof}
Note the similarity of equations \eqref{eq:vPDE}, \eqref{eq:vBC}, \eqref{eq:vBC,l} and \eqref{eq:vBC,r} to equations \eqref{eq:uPDE}, \eqref{eq:uBC}, \eqref{eq:BC,l} and \eqref{eq:BC,r}; the only difference is that $v^\eps(t,x,y)$ has an $\eps$-dependent BC in \eqref{eq:vBC} whereas $u^\eps(t,x,y)$ in \eqref{eq:uBC} does not.  The $\eps$-dependent BC requires a minor modification of the asymptotic analysis of section \ref{sec:asymptotics}.  The result of this modification is contained in the following Theorem.
\begin{theorem}
The approximate price $\l(u^{(0)}(t,x) + \sqrt{\eps} \, u^{(1)}(t,x)\r)_{rebate}$ of a rebate option is given by
\begin{align}
\l(u^{(0)}(t,x) + \sqrt{\eps} \, u^{(1)}(t,x)\r)_{rebate}
	&=	e^{\mu t}\l( \Phi^{(0)}(x) + \sqrt{\eps} \, \Phi^{(1)}(x) \r) + \l( v^{(0)}(t,x) + \sqrt{\eps} \, v^{(1)}(t,x) \r)
\end{align}
where $\Phi^{(0)}(x) $ and $\Phi^{(1)}(x) $ satisfy
\begin{align}
0
	&=	\l( \mu - \< \L^{(0)} \> \r) \Phi^{(0)} , &
R_l	
	&=	\Phi^{(0)}(l)	, &
R_r
	&=	\Phi^{(0)}(r) ,	\\
\A^{(1)} \, \Phi^{(0)}
	&=	 \l( \mu - \< \L^{(0)} \> \r)\Phi^{(1)} , &
0
	&=	\Phi^{(1)}(l)	, &
0
	&=	\Phi^{(1)}(r) .
\end{align}
The functions $v^{(0)}(t,x)$ and $v^{(1)}(t,x)$ have spectral expansions given by the right hand side of  \eqref{eq:u0} and \eqref{eq:u1} in Theorem \ref{thm:main} where $A_n^{(0)}$ and $A_n^{(1)}$ are now given by
\begin{align}
A_n^{(0)}
	&=	\l( \Psi_n^{(0)} , h - \Phi^{(0)} \r)_s , \\
A_n^{(1)}
	&=	- \l( \Psi_n^{(0)} , \Phi^{(1)} \r)_s - \sum_m A_m^{(0)} \l( \Psi_n^{(0)}, \Psi_m^{(1)}\r)_s .
\end{align}
\end{theorem}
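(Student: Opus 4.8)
The plan is to combine the decomposition of Lemma~\ref{lem:rebate} with two separate expansions: a singular-perturbation expansion of the \emph{stationary} problem for $\Phi^\eps(x,y)$, and the spectral expansion of Theorem~\ref{thm:main} applied to the evolution problem \eqref{eq:vPDE}--\eqref{eq:vBC,r} for $v^\eps(t,x,y)$ with its $\eps$-dependent initial data.

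First I would expand $\Phi^\eps = \Phi^{(0)} + \sqrt{\eps}\,\Phi^{(1)} + \eps\,\Phi^{(2)} + \cdots$, substitute into $(-\mu + \L_{X,Y}^\eps)\Phi^\eps = 0$, and use the decomposition $\L_{X,Y}^\eps = \frac{1}{\eps}\L^{(-2)} + \frac{1}{\sqrt{\eps}}\L^{(-1)} + \L^{(0)}$. The $\O(\eps^{-1})$ and $\O(\eps^{-1/2})$ equations force $\Phi^{(0)}=\Phi^{(0)}(x)$ and $\Phi^{(1)}=\Phi^{(1)}(x)$, exactly as for $\Psi_q^{(0)}$, $\Psi_q^{(1)}$ in Section~\ref{sec:eigen}. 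The $\O(\eps^0)$ equation is the Poisson equation $\L^{(-2)}\Phi^{(2)} = (\mu - \L^{(0)})\Phi^{(0)}$, whose centering condition is $(\mu - \<\L^{(0)}\>)\Phi^{(0)} = 0$; together with the matched boundary data $\Phi^{(0)}(l)=R_l$, $\Phi^{(0)}(r)=R_r$ this is the first claimed equation. Solving the Poisson equation gives $\Phi^{(2)}(x,y) = -\frac{1}{2}\phi(y)(\d^2_{xx} - \d_x)\Phi^{(0)}(x)$, so that $\<\L^{(-1)}\Phi^{(2)}\> = \A^{(1)}\Phi^{(0)}$, precisely the computation leading to \eqref{eq:A}. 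The $\O(\eps^{1/2})$ centering condition then reads $\A^{(1)}\Phi^{(0)} = (\mu - \<\L^{(0)}\>)\Phi^{(1)}$, and matching boundary data at order $\sqrt{\eps}$ gives $\Phi^{(1)}(l)=\Phi^{(1)}(r)=0$; this is the second claimed equation. (As for barrier options in \cite{fouque}, higher-order terms need not satisfy the boundary conditions exactly, but the present statement concerns only the formal $\O(\eps^{1/2})$ approximation, so this is immaterial here.)

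Next I would treat $v^\eps$. Equations \eqref{eq:vPDE}--\eqref{eq:vBC,r} have exactly the form \eqref{eq:uPDE}--\eqref{eq:BC,r}, the only change being the $\eps$-dependent initial condition $v^\eps(0,x,y) = h(x) - \Phi^\eps(x,y)$; hence $v^\eps$ admits the spectral representation \eqref{eq:representation} with the \emph{same} eigenfunctions $\Psi_q^\eps$, eigenvalues $\lam_q^\eps$ and time factors $g_q^\eps$, and $v^{(0)}, v^{(1)}$ have the forms \eqref{eq:u0}, \eqref{eq:u1}. Only the coefficients $A_q^\eps = A_q^{(0)} + \sqrt{\eps}\,A_q^{(1)} + \cdots$ change. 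Expanding the initial condition in powers of $\sqrt{\eps}$ and using $g_q^{(0)}(0)=1$, $g_q^{(1)}(0)=0$: the $\O(\eps^0)$ relation $v^{(0)}(0,x) = h(x) - \Phi^{(0)}(x) = \sum_n A_n^{(0)}\Psi_n^{(0)}(x)$ gives, by orthonormality, $A_n^{(0)} = \l(\Psi_n^{(0)}, h - \Phi^{(0)}\r)_s$; the $\O(\eps^{1/2})$ relation $v^{(1)}(0,x) = -\Phi^{(1)}(x) = \sum_n\l(A_n^{(1)}\Psi_n^{(0)}(x) + A_n^{(0)}\Psi_n^{(1)}(x)\r)$ gives, after pairing with $\Psi_n^{(0)}$, the stated $A_n^{(1)} = -\l(\Psi_n^{(0)}, \Phi^{(1)}\r)_s - \sum_m A_m^{(0)}\l(\Psi_n^{(0)}, \Psi_m^{(1)}\r)_s$. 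The continuous-spectrum case is identical with sums replaced by integrals.

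Finally, substituting $u_{rebate}^\eps = e^{\mu t}\Phi^\eps + v^\eps$ from Lemma~\ref{lem:rebate} and collecting the $\O(\eps^0)$ and $\O(\eps^{1/2})$ terms yields the stated formula. I expect the only genuinely new point — and hence the main thing to get right — is the singular-perturbation analysis of the stationary problem for $\Phi^\eps$: one must verify that its hierarchy of equations and centering conditions produces precisely the operators $\<\L^{(0)}\>$ and $\A^{(1)}$ already appearing in the time-dependent analysis, so that the same $\phi(y)$ of \eqref{eq:PoissonPhi} and the same group parameters $V_2, V_3$ of \eqref{eq:A} govern $\Phi^{(1)}$. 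Everything else is a direct transcription of the arguments of Sections~\ref{sec:eigen} and \ref{sec:prices}.
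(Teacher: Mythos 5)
Your proposal is correct and follows exactly the route the paper intends: Lemma~\ref{lem:rebate} plus a singular-perturbation hierarchy for the stationary problem $(-\mu+\L_{X,Y}^\eps)\Phi^\eps=0$ (yielding the centering conditions with $\<\L^{(0)}\>$ and $\A^{(1)}$) and the spectral machinery of Theorem~\ref{thm:main} applied to $v^\eps$ with the $\eps$-dependent initial data. The paper merely states that the proof "follows from Lemma~\ref{lem:rebate} by modifying the analysis of section~\ref{sec:asymptotics}," so your write-up is a faithful, more detailed version of the same argument.
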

\begin{proof}
The proof follows from Lemma \ref{lem:rebate} and by modifying the analysis of section \ref{sec:asymptotics} to account for the $\eps$-dependent BC from \eqref{eq:vBC}.
\hfill
\end{proof}
We note that $\Phi^{(0)}(x)$ is given by
\begin{align}
\Phi^{(0)}(x)
	&=	e^{- c x} \l(\frac{R_r \,e^{c \, r}\sinh\l((x-l) \l(\frac{1}{2}+\frac{\mu }{\sig ^2}\r)\r) + R_l \, e^{c \, l} \sinh\l((r-x) \l(\frac{1}{2}+\frac{\mu }{\sig ^2}\r)\r) }{\sinh\l((r-l)\l(\frac{1}{2}+\frac{\mu }{\sig ^2}\r)\r) }\r)
\end{align}
We omit the expression for $\Phi^{(1)}(x)$ for the sake of brevity.

\subsection{Calibration}\label{sec:calibration}
In this section we will briefly discuss how one can calibrate the class of fast mean-reverting models to the market using European call option data.
\par
One of the great advantages of the option-pricing framework developed in this paper is that, although the fast mean-reverting volatility process adds five parameters ($\y$, $\eps$, $\ups$, $\rho$, $y$) and two unspecified functions ($f$ and $\Lambda$) to the Black-Scholes framework, specific knowledge of these parameters and functions is not needed to specify the approximate price of an option.  Instead, the parameters and functions listed above are replaced two group parameters, $V_2^\eps$ and $V_3^\eps$, given by \eqref{eq:Veps}.  What is more, $V_2^\eps$ and $V_3^\eps$ are defined consistently throughout this paper irrespective of the type of options being considered.  That is, the group parameters $V_2^\eps$ and $V_3^\eps$ that are used to give the approximate price of a European call option are the same parameters that are used to give the approximate price of e.g. a double-barrier knock-out option.  Thus, one can use (liquid) European call option data to calibrate the class of fast mean-reverting stochastic volatility models to the market.  Once this is done, the obtained group parameters can be used to price (illiquid) exotic options.  The following calibration procedure is suggested in \cite{fouque}:
\begin{enumerate}
	\item Using (liquid) European call options, fit observed implied volatilities $I_{ij}$ as an affine function of the log moneyness to maturity ratio ($\text{LMMR}_{ij}$)	\label{item:affine}
	\begin{align}
	I_{ij}	&=	b + a \, \text{LMMR}_{ij},	&	\text{LMMR}_{ij}	&=	\log(K_{ij}/S_t) / (T_i-t),
	\end{align}
	where $I_{ij}$ is defined implicitly through
	\begin{align}
	u^{BS}(T_i,K_{ij},I_{ij})	&= u^{\text{Market}}(T_i,K_{ij})	.
	\end{align}
	\item The group parameters $V_2^\eps$ and $V_3^\eps$ are then given by solving \label{item:defs}
	\begin{align}
	b 				&=	\sigma^* + \frac{V_3^\eps}{2 \sigma^*}\left( 1 - \frac{2r}{(\sigma^*)^2}\right)	,	&
	a 				&=	\frac{V_3^\eps}{(\sigma^*)^3}	, &
	\sigma^*	&=	\sqrt{\sig^2 + 2 V_2^\eps}	.
	\end{align}	
We note that $\sig$, the average level of volatility of the underlying, which can be obtained from historical returns data, is needed to determine $V_2^\eps$.
	\item	Use the obtained values for $\sig$, $V_2^\eps$ and $V_3^\eps$ to give approximate prices for (illiquid) exotic options.
\end{enumerate}
The above calibration scheme was tested with single-barrier knock-out options in the context of credit risk in \cite{fouque2006stochastic}, where it was shown to work well.
%

\section{Conclusion}
Using elements from spectral analysis and singular perturbation theory, we have presented a systematic way to obtain the approximate price of a variety of European and path-dependent options in a fast mean-reverting stochastic volatility setting.  One key feature of our technique is that we were able to maintain correlation between the stock-price and volatility processes via two Brownian motions and still produce pricing formulas for double-barrier options.  To our knowledge, this is the first paper to address this issue.  Extending our techniques to more sophisticated models is an on-going process.  A logical next step, for example, would be to add a fast mean-reverting factor of volatility to a model such as CEV or Heston as done in \cite{lorig} or to add a slow-varying factor of volatility to the class of models considered in this paper.

\section{Thanks}
The authors are greatly indebted to two anonymous referees, whose suggestions greatly improved both the content and readability of this paper.

\appendix
\section{Addressing Numerical Integration Difficulties}\label{sec:DoubleIntegral}
In this section we demonstrate how to accurately evaluate the double integral in equation \eqref{eq:u1,uo}, which we repeat here for clarity
\begin{align}
J	&=
\int_0^{\infty} \int_0^{\infty}
		g_\nu^{(0)} \left( A_\nu^{(0)} a_{\nu,\om}^{(1)} \Psi_\om^{(0)} - A_\om^{(0)} a_{\om,\nu}^{(1)} \Psi_\nu^{(0)} \right)
		d\om \, d\nu .	\label{eq:I}
\end{align}
The difficulty in numerically evaluating \eqref{eq:I} is that, for most $A_\nu^{(0)}$, the integrand blows up as $\om \rightarrow \nu$.  This is due to the factor of $1/\left(\nu^2-\om^2\right)^{2}$ which appears in $a_{\nu,\om}^{(1)}$ (refer to equation \eqref{eq:Psi1,uo} for details).  Thus, as it is written in equation \eqref{eq:I}, numerically evaluating $J$ would require adding and subtracting some very large numbers, which most numerical integrators are not very well-equipped to do.  Thankfully, there are a few numerical tricks we can perform in order to facilitate numerical evaluation of \eqref{eq:I}.  To begin, we establish some notation.  Let
\begin{align}
h(\nu,\om)
	&=	A_\nu^{(0)} a_{\nu,\om}^{(1)} \Psi_\om^{(0)} - A_\om^{(0)} a_{\om,\nu}^{(1)} \Psi_\nu^{(0)}	,	\label{eq:h}	\\
g(\nu)	
	&=	g_\nu^{(0)}	,
\end{align}
and make the following change of variables
\begin{align}
\nu(u,v)
	&= \frac{1}{\sqrt{2}}\left(u-v\right)	,	\\
\om(u,v)
	&= \frac{1}{\sqrt{2}}\left(u+v\right)	.
\end{align}
Now, we define
\begin{align}
G(u,v)
	&:=	g(\nu(u,v))	,	\\
H(u,v)
	&:=	h(\nu(u,v),\om(u,v))	,
\end{align}
so that
\begin{align}
J
	&=	\int_0^{\infty} \int_{-u}^u G(u,v) H(u,v) dv \, du	.	\label{eq:I2}
\end{align}
So far, everything we have done is cosmetic; the integrand of equation \eqref{eq:I2} still blows up near $v=0$ (which corresponds to $\nu=\om$).  Note, however, that $H(u,v)=-H(u,-v)$.  As such, we may write equation \eqref{eq:I2} as
\begin{align}
J
	&=	\int_0^{\infty} \int_0^u H(u,v)\left(G(u,v)-G(u,-v)\right) dv \, du	.	\label{eq:I3}
\end{align}
The integrand in equation (\ref{eq:I3}) is well-behaved throughout its domain.  Figure \ref{fig:integrands} illustrates how this simple trick smooths out the singularity.

\begin{figure}[!ht]
	\centering
  	\subfigure[][]{\includegraphics[scale=0.62]{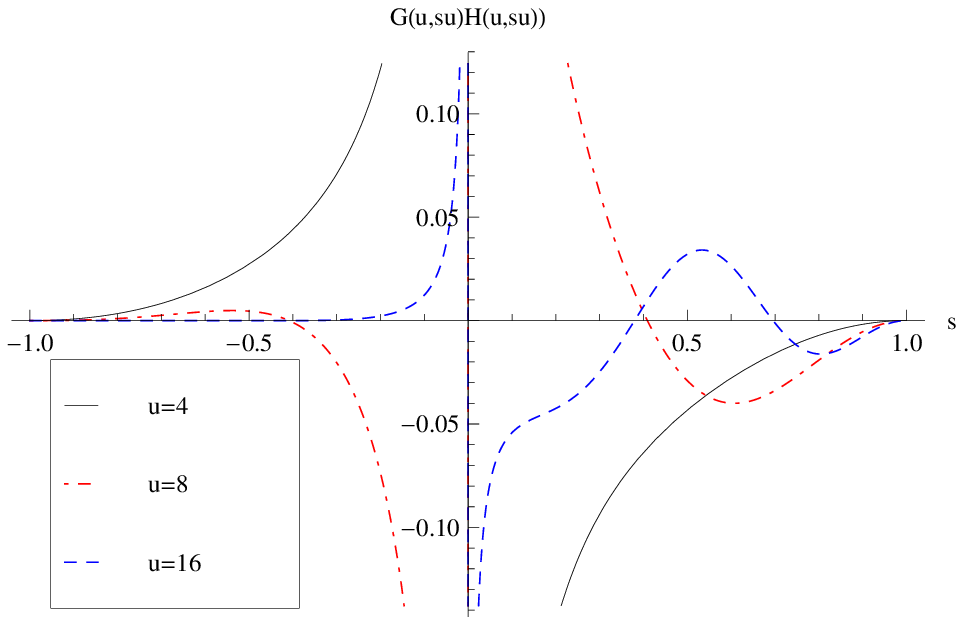} \label{subfig:bad}}
    \subfigure[][]{\includegraphics[scale=0.62]{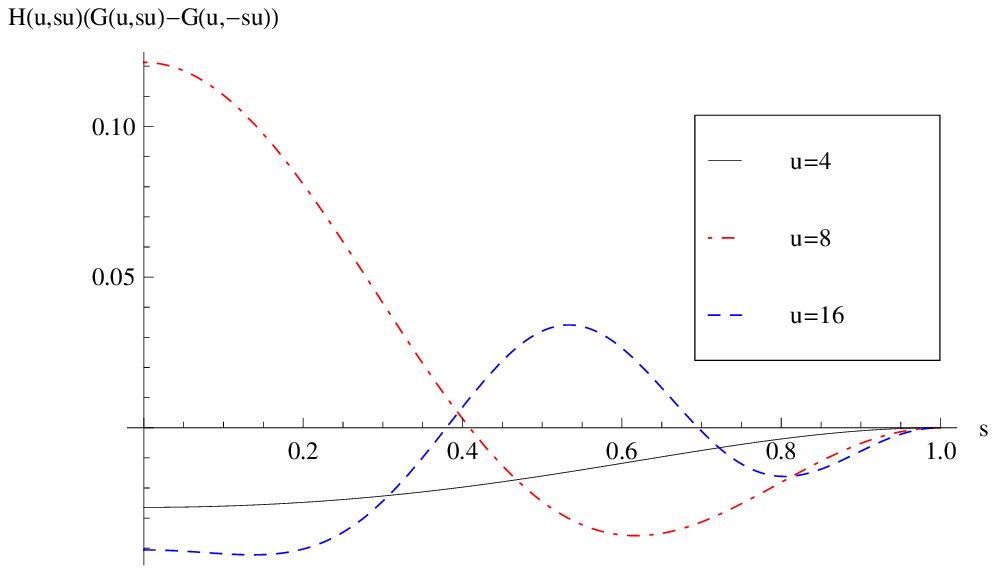} \label{subfig:good}}
    \caption[]{In sub-figure \subref{subfig:bad} we plot the integrand of equation \eqref{eq:I2}, $G(u,s u) H(u,s u)$, for $s \in (-1,1)$.  In sub-figure \subref{subfig:good} we plot the integrand of equation \eqref{eq:I3}, $H(u,s u)\left(G(u,s u)-G(u,-s u)\right)$, for $s \in (0,1)$.  In both plots, the solid black line corresponds to $u=4$, the dot-dashed red line corresponds to $u=8$, and the dashed blue line corresponds to $u=16$. Note that the integrand of equation \eqref{eq:I3} is well-behaved, whereas the integrand of equation \eqref{eq:I2} blows up at $s=0$.  For both plots, we chose the following parameters: $t=1/2$, $x=0$, $\mu=0.05$, $\sqrt{\sig^2}=0.34$, $r=4$, $V_2=1$ and $V_3=1$.}
    \label{fig:integrands}
\end{figure}

\bibliographystyle{siam}
\bibliography{EigenvalueExpansionBibtex}

\end{document}